\newcommand \R { \mathbb{R} } 
\newcommand \N { \mathbb{N} }
\newcommand \C { \mathbb{C} }
\newcommand \Sph { \mathbb{S} }
\DeclareMathOperator*{\argmin}{{arg\, min}}
\DeclareMathOperator{\tr}{tr}
\DeclareMathOperator{\imag}{\mathrm{i}}
\DeclareMathOperator{\spn}{{span}}
\newtheorem{theorem}{Theorem}[section]
\newtheorem{lemma}[theorem]{Lemma}
\newtheorem{proposition}[theorem]{Proposition}
\newtheorem{definition}[theorem]{Definition}
\newtheorem{example}[theorem]{Example}
\newtheorem{remark}[theorem]{Remark}
\begin{document} 

\begin{titlepage}

\begin{center}

\textbf{\large Quantum Measurement Trees, II: \\
Quantum Observables as Ortho-Measurable Functions \\
and Density Matrices as Ortho-Probability Measures}

\medskip
Peter J.\ Hammond: \url{p.j.hammond@warwick.ac.uk} \\
Dept.\ of Economics, University of Warwick, Coventry CV4 7AL, UK.

\medskip
This version: 2025 May 16th; typeset from \url{QMeasTreesBCretaWPr.tex}

\bigskip
\textbf{Abstract}:
\end{center}\noindent
Given a quantum state in the finite-dimensional Hilbert space $ \C^n $,  
the range of possible values of a quantum observable 
is usually identified with the discrete spectrum of eigenvalues 
of a corresponding Hermitian matrix.
Here any such observable is identified with:
(i) an ``ortho-measurable'' function 
defined on the Boolean ``ortho-algebra'' generated by the eigenspaces 
that form an orthogonal decomposition of $ \C^n $;
(ii) a ``numerically identified'' orthogonal decomposition of $ \C^n $.
The latter means that each subspace of the orthogonal decomposition 
can be uniquely identified by its own attached real number, 
just as each eigenspace of a Hermitian matrix 
can be uniquely identified by the corresponding eigenvalue.
Furthermore, any density matrix on $ \C^n $ is identified 
with a Bayesian prior ``ortho-probability'' measure defined on the linear subspaces 
that make up the Boolean ortho-algebra induced by its eigenspaces.
Then any pure quantum state is identified with a degenerate density matrix,
and any mixed state with a probability measure on a set of orthogonal pure states.
Finally, given any quantum observable, 
the relevant Bayesian posterior probabilities of measured outcomes 
can be found by the usual trace formula that extends Born's rule.
\texttt{[193 words]}

\bigskip \noindent

\textit{Keywords:} Quantum measurement trees, quantum contexts, 
numerically identified orthogonal decompositions, ortho-measurable functions, 
density matrices, ortho-probability measures.

\end{titlepage}

\section{Introduction and Outline}
\subsection{Quantum Measurement Trees}

Following Hammond (2025), this is the second paper related to a research project 
on quantum measurement trees.
These are similar to the decision trees considered by Raiffa (1968), 
which are one-player versions of the games in extensive form 
considered by von Neumann (1928).
Following Hammond (1988, 2022), apart from decision nodes. 
as well as terminal nodes to which consequences are assigned, decision trees can include: 
(i) \emph{chance nodes}, where what Anscombe and Aumann (1963) call a roulette lottery, 
with risk described by ``objective'' or hypothetical probabilities, is resolved;
(ii) \emph{event nodes}, where an uncertain event
with ``subjective'' or personal probabilities occurs, 
depending on the outcome of what Anscombe and Aumann (1963) 
call a ``horse lottery'' of the kind considered by Savage (1954).

In fact, a quantum measurement tree is like a decision tree in which, 
in addition to chance and event nodes:
(i) any decision node has become a preparation node 
at which relevant details of a quantum experiment are determined;
(ii) at any terminal node, a possible real measurement of some quantum observable
is determined.
This project addresses the question of how far quantum measurement trees, 
with classical probabilities applying at each chance or event node, 
are able to describe processes that occur in an appropriate Hilbert space 
of solutions to Schr\"odinger's wave equation.%
\footnote{As is well known, these solutions in the complex Hilbert space $ \C^n $ 
typically involve unitary matrices. 
The relevant mathematics of real eigenvalues and orthogonal eigenvectors 
for Hermitian matrices which represent observables and density matrices in $ \C^n $  
turns out to be very similar to that for symmetric matrices which, 
at least in principle if not in physical reality,
could represent observables and density matrices 
whose domain is limited to the real Euclidean space $ \R^n $.}

The first paper presented two examples that can be described without using any Hilbert space.
Here we consider a simple form of quantum measurement tree, 
where there is only one measurement node on each branch of the tree.
The tree starts with a preparation node where an experimental configuration is selected.
This configuration has usually been modelled as the combination of:
(i) a quantum observable in the form of a Hermitian or self-adjoint matrix;
(ii) a quantum state in the form of a density matrix. 
In combination with the quantum state, that Hermitian matrix 
determines at the immediately succeeding measurement chance node 
the specified hypothetical or objective probabilities in a roulette lottery
that determines which eigenspace in its spectral decomposition occurs randomly.

\subsection{Characterizing Observables and Probability Densities}

The contribution of this paper is to describe, 
for the $n$-dimensional complex Hilbert space $ \C^n $, some relevant natural bijections: 
\begin{enumerate}
	\item first, between each pair of the three spaces of:
	\begin{itemize}
		\item Hermitian or self-adjoint $n \times n$ matrices $ \mathbf A$,
		each with a spectrum $ s^{ \mathbf A} \subset \R$ of eigenvalues;
		\item numerically identified orthogonal decompositions of $ \C^n $, 
		as described below in Definition \ref{def:numLabelledOrthDecomp} 
		of Section \ref{ss:orthMeasFns};%
\footnote{The term ``numerically identified'' 
seems preferable to ``numbered'' or ``enumerated'',
since those suggest counting using members of the set $ \N $ of natural numbers 
rather than being identified by a real number.} 
		\item suitably defined functions 
		 \( \C^n \owns \mathbf x \mapsto f^\mathbf A( \mathbf x) \in \R \cup \{*\} \)
		whose co-domain is a suitable one-point extension of the real line,
		and which are ``ortho-measurable'' in the sense of being measurable 
		with respect to the Boolean ``ortho-algebra'' generated by the linear subspaces 
		that form the spectral orthogonal decomposition 
 \( \bigoplus _{ \lambda \in s^{ \mathbf A}} L_\lambda \)
		of $ \C^n $, with the property that 
 \( ( f^\mathbf A) ^{-1} ( \lambda ) = L_\lambda \setminus \{ \mathbf 0 \} \) 
 		for all $ \lambda \in s^{ \mathbf A} $;
	\end{itemize}
 	\item second, between appropriate subspaces of the three spaces specified above
	that correspond to:
	\begin{itemize}
		\item $n \times n$ density matrices $ \boldsymbol \rho $;
		\item orthogonal decompositions of $ \C^n $ 
		that are numerically identified by probability, 
		as described below in Definition \ref{def:probLabelledOrthDecomp}
		of Section \ref{ss:densityMatrices};
		\item suitably defined ``ortho-probability'' measures which are defined 
		on one of the ``ortho-measurable'' spaces defined above.
	\end{itemize}
\end{enumerate}

\subsection{The Quantum Challenge}

As discussed in Hammond (2025), it has been widely recognized 
that the essence of the ``quantum challenge'' can be seen 
as the impossibility of accommodating more than a small set of observable quantum phenomena
within a single contextual probability space in the sense of Kolmogorov.
The main significance of the results concerning bijections in this paper
is that the relevant concepts of measurable set and probability measure 
can remain entirely classical.
To make this possible, however, requires using an appropriate version
of Vorob$'$ev's (1962) notion of a family of probability measures 
defined on a ``multi-measurable space''.
Each of the Boolean or $ \sigma $-algebras involved can be regarded as its own quantum context.
Then staying within just one context is apparently an instance 
of what Griffiths (2002) calls the ``single framework rule''.%
\footnote{See also Hohenberg (2010) as well as Friedberg and Hohenberg (2014).}
Which context or framework is ultimately relevant depends, of course, 
on what quantum observable (or set of quantum observables) is selected
when an experimental configuration is being created. 

\subsection{Outline of Paper}

Section \ref{s:eigenpairs} recapitulates relevant results 
concerning Hermitian or self-adjoint matrices.
In particular, it introduces the notion of an ``eigenpair'' 
combining an eigenvalue with its associated eigenspace of eigenvectors.
It also emphasizes the orthogonal decomposition of the space $ \C^n $ 
into eigenspaces associated with different eigenvalues.

The following Section \ref{s:orthProj} considers orthogonal projection matrices,
along with ortho-partitions, and ortho-algebras, 
especially those induced by the spectral decomposition of a Hermitian matrix.
This leads to the concept of an ortho-measurable function.

After these essential preliminaries, Section~\ref{s:firstBijection} is devoted 
to the first bijections between the three sets of: (i) Hermitian matrices; 
(ii) numerically identified orthogonal decompositions; 
and (iii) ortho-measurable functions. 

Section \ref{s:waveVectors} moves on to probabilities, 
starting with how they relate to wave functions 
--- or rather wave vectors, when time is ignored.
Specifically, a normalized wave vector 
can be regarded as a ``pre-probability'' latent variable in $ \C^n $
that subsequently determines actual probabilities.
The latter are given by Born's rule or, in the special case of an ortho-algebra 
based on the canonical orthonormal basis of $ \C^n $, by the squared modulus rule.

Density matrices are the subject of Section \ref{s:densityMatrices}, 
which also constructs the second family of bijections between: (i) density matrices;
(ii) orthogonal decompositions identified numerically by probability;
and (iii) ortho-probability measures restricted to a measurable space 
which constitutes an ortho-algebra.
Then any pure quantum state is identified with a degenerate density matrix,
and any mixed state that can be represented by a general density matrix 
is identified with a probability mixture of orthogonal pure states.%
\footnote{As is further discussed in Section \ref{ss:pureStates},
this departs from the standard physicists' concept of a mixed state,
which does not require orthogonality of the pure states that have positive probability.} 

The concluding Section \ref{s:conclude} begins by summarizing the key features 
of the quantum measurement tree that this paper has shown how to construct.
These include details of how to construct the measurable and probability metaspaces
of the kind that were used to describe the randomness that occurs  
in the two examples presented in Hammond (2025).
Finally, it also briefly touches on some key issues that deserve answers in future work.

\section{Eigenpairs of Hermitian Matrices in $ \C^n $} \label{s:eigenpairs}

Most of the definitions and results in this section and the next are standard.

\subsection{Hilbert Space and Adjoint Matrices}

The $n$-dimensional linear space $ \C^n $ over the algebraic field $ \C $
has as its typical member the \emph{column $n$-vector} $ \mathbf x = ( x_i )_{i = 1}^n $
whose $n$ components are complex numbers $ x_i \in \C $.
Given any complex number $c = a + \imag b \in \C$, where $a, b \in \R$ and $ \imag ^2 = -1$,
its \emph{complex conjugate} is $ \bar c = a - \imag b \in \C$. 

\begin{definition}
	The \emph{adjoint} $ \mathbf A^* $ 
	of any $m \times n$ matrix $ \mathbf A = ( a_{ij} )_{m \times n} $ is the $n \times m$ 
	transposed conjugate matrix $ \mathbf A^* = ( a^*_{ij} )_{n \times m} $
	whose $ij$ element satisfies $ a^*_{ij} = \bar a_{ji} $, 
	implying that $ \mathbf A^* $ is the transpose $ \bar { \mathbf A}^\top $
	of the matrix $ \bar { \mathbf A}$ whose $ij$ element 
	is the complex conjugate $ \bar a_{ij} $ of the element $ a_{ij} $ of matrix $ \mathbf A$.
\end{definition} 

An important property of adjoint matrices is that, 
for all pairs $ \mathbf A, \mathbf B$ of $n \times n$ matrices,
one has $( \mathbf A \mathbf B)^* = \mathbf B^* \mathbf A^* $.

\smallskip
The space $ \C^n $ becomes a \emph{Hilbert space} when equipped
with the complex-valued \emph{inner product} which is defined
for all pairs $ \mathbf x = ( x_i )_{i = 1}^n $
and $ \mathbf y = ( y_i )_{i = 1}^n $ of column $n$-vectors
by $< \mathbf x, \mathbf y > \ := \sum_{i = 1}^n \bar x_i y_i $. 
This Hilbert space has a real-valued \emph{norm} $ \| \mathbf x \| \ge 0$
whose square is defined for all $n$-vectors $ \mathbf x = ( x_i )_{i = 1}^n $ by
\[ \| \mathbf x \|^2 := \ < \mathbf x, \mathbf x > \ = \sum\nolimits _{i = 1}^n \bar x_i x_i 
	= \sum\nolimits _{i = 1}^n | x_i |^2 \]
\begin{remark}
	Our notation allows 
	the inner product $< \mathbf x, \mathbf y > \ = \sum_{i = 1}^n \bar x_i y_i $
	of any pair of column $n$-vectors $ \mathbf x = ( x_i )_{i = 1}^n $ 
	and $ \mathbf y = ( y_i )_{i = 1}^n $ to be rewritten more concisely
	as the $1 \times 1$ ``matrix'' product $ \mathbf x^* \mathbf y$
	of the $1 \times n$ adjoint row matrix 
	\( \mathbf x^* = \left( ( \bar x_i )_{i = 1}^n \right)^\top = \mathbf { \bar x}^\top \)
	with the $n \times 1$ column matrix $ \mathbf y = ( y_i )_{i = 1}^n $.
\end{remark}

The two vectors $ \mathbf x, \mathbf y \in \C^n $ are \emph{orthogonal} 
just in case $ \mathbf x^* \mathbf y = 0$.

The \emph{unit sphere} of $ \C^n $ is the set
\[ \Sph := \{ \mathbf x \in \C^n \mid \| \mathbf x \|^2 = 1 \} 
		  = \{ \mathbf x \in \C^n \mid \mathbf x^* \mathbf x = 1 \} \]
 
\subsection{Hermitian Matrices and Their Eigenpairs: A Review}

The following two definitions extend to $ \C^n $ the respective definitions 
of symmetric and orthogonal matrices in $ \R^n $.
\begin{definition}
\begin{itemize}
	\item The $n \times n$ matrix $ \mathbf A$ is \emph{Hermitian} or \emph{self-adjoint}
	just in case its adjoint satisfies $ \mathbf A^* = \mathbf A$.
	\item The $n \times n$ matrix $ \mathbf U$ is \emph{unitary} 
	just in case its adjoint satisfies $ \mathbf U^* = \mathbf U^{-1} $.		
\end{itemize}
\end{definition}

The next two definitions are of eigenvalues, eigenvectors, eigenpairs, 
and then the spectrum of a Hermitian matrix.

\begin{definition}
\begin{itemize}
	\item The pair $( \lambda, \mathbf x) \in \C \times( \C^n \setminus \{ \mathbf 0\} )$ 
	is an \emph{eigenpair} of $ \mathbf A$ just in case $ \mathbf {Ax} = \lambda \mathbf x$,
	so $ \lambda $ is an \emph{eigenvalue} 
	and $ \mathbf x \ne \mathbf 0$ is a corresponding \emph{eigenvector}.  	
	\item The \emph{spectrum} of a matrix $ \mathbf A$ is the finite set $ s^{ \mathbf A} $ 
	of its eigenvalues.
\end{itemize}
\end{definition}

The following two results are well known properties 
of the eigenvalues and eigenvectors of a Hermitian matrix.

\begin{proposition}
	Suppose that $ \mathbf A$ is any Hermitian matrix on $ \C^n $
	and that $( \lambda, \mathbf x) $ with $ \mathbf x \ne \mathbf 0$ is any eigenpair.
	Then $ \lambda \in \R$.
\end{proposition}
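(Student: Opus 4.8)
The plan is to exploit the scalar quantity $ \mathbf x^* \mathbf A \mathbf x$, which is a $1 \times 1$ matrix and hence an ordinary complex number, and to show that it is simultaneously (i) equal to $ \lambda \| \mathbf x \|^2 $, by the eigenpair equation, and (ii) real, by the Hermitian property. Since $ \mathbf x \ne \mathbf 0$ forces $ \| \mathbf x \|^2 > 0$, dividing will then pin down $ \lambda $ as a ratio of two real numbers.

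First I would substitute the eigenvalue equation $ \mathbf A \mathbf x = \lambda \mathbf x$ into the product $ \mathbf x^* \mathbf A \mathbf x$ to obtain $ \mathbf x^* \mathbf A \mathbf x = \lambda \, \mathbf x^* \mathbf x = \lambda \| \mathbf x \|^2 $, using the definition $ \| \mathbf x \|^2 = \mathbf x^* \mathbf x$ recorded above together with the fact that $ \| \mathbf x \|^2 > 0$ whenever $ \mathbf x \ne \mathbf 0$.

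Next I would show that this same scalar equals its own complex conjugate. Taking the adjoint of the $1 \times 1$ matrix $ \mathbf x^* \mathbf A \mathbf x$ and applying the reversal rule $( \mathbf A \mathbf B)^* = \mathbf B^* \mathbf A^* $, together with $( \mathbf x^* )^* = \mathbf x$, gives $( \mathbf x^* \mathbf A \mathbf x)^* = \mathbf x^* \mathbf A^* \mathbf x$. The self-adjointness hypothesis $ \mathbf A^* = \mathbf A$ then reduces the right-hand side back to $ \mathbf x^* \mathbf A \mathbf x$. Because the adjoint of a $1 \times 1$ matrix is simply the complex conjugate of its single entry, this says that the number $ \mathbf x^* \mathbf A \mathbf x$ coincides with its own conjugate, and so lies in $ \R$.

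Combining the two steps yields $ \lambda \| \mathbf x \|^2 \in \R$ with $ \| \mathbf x \|^2 $ a strictly positive real, whence $ \lambda = ( \mathbf x^* \mathbf A \mathbf x) / \| \mathbf x \|^2 \in \R$. There is no serious obstacle here; the only point that needs care is the bookkeeping identifying the adjoint of the $1 \times 1$ matrix $ \mathbf x^* \mathbf A \mathbf x$ with ordinary complex conjugation, so that a \emph{self-adjoint scalar} genuinely means a \emph{real number}.
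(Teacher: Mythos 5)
Your proof is correct and follows essentially the same route as the paper: both arguments take the adjoint of the eigenvalue equation, use $\mathbf A = \mathbf A^*$ to compare the scalar $\mathbf x^* \mathbf A \mathbf x$ with its complex conjugate, and conclude by dividing out $\mathbf x^* \mathbf x > 0$. The only difference is organizational: you isolate the intermediate fact that $\mathbf x^* \mathbf A \mathbf x$ is real and then write $\lambda$ as a quotient, whereas the paper forms the difference directly and shows $(\lambda - \bar\lambda)\,\mathbf x^* \mathbf x = 0$.
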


\begin{proof}
	Because $ \mathbf {Ax} = \lambda \mathbf x$ 
	and so $ \mathbf x^* \mathbf A^* = \bar \lambda \mathbf x^* $, 
	it follows from $ \mathbf A = \mathbf A^* $ that
 \[ ( \lambda - \bar \lambda ) \mathbf x^* \mathbf x
 = \mathbf x^* ( \lambda \mathbf x) - ( \bar \lambda \mathbf x^* ) \mathbf x
 = \mathbf x^* ( \mathbf {Ax} ) - ( \mathbf x^* \mathbf A^* ) \mathbf x 
 = \mathbf x^* ( \mathbf A - \mathbf A^* ) \mathbf x = 0 \]
	But $ \mathbf x \ne \mathbf 0$ implies that $ \mathbf x^* \mathbf x > 0$.
	It follows that $ \lambda = \bar \lambda $, so the eigenvalue $ \lambda $ is real.
\end{proof}

\begin{proposition} \label{orthogSpaces}
	If $( \lambda, \mathbf x) $ are $( \mu, \mathbf y) $ are any two eigenpairs 
	with $ \lambda, \mu \in \R$ and $ \lambda \ne \mu $, then $ \mathbf x^* \mathbf y = 0$. 
\end{proposition}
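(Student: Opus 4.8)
The plan is to form the scalar ``sandwich'' quantity $ \mathbf x^* \mathbf A \mathbf y$ and to evaluate it in two different ways. This expression is the natural object to consider because the hypotheses supply eigenvalue equations describing how $ \mathbf A$ acts on each of $ \mathbf x$ and $ \mathbf y$, while the self-adjointness $ \mathbf A = \mathbf A^* $ allows $ \mathbf A$ to be transferred freely from one factor to the other. Comparing the two evaluations should isolate the factor $ \lambda - \mu $, which the hypothesis $ \lambda \ne \mu $ makes nonzero.

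First I would let $ \mathbf A$ act to the right on the eigenvector $ \mathbf y$: since $ \mathbf {Ay} = \mu \mathbf y$, this gives $ \mathbf x^* \mathbf A \mathbf y = \mu \, \mathbf x^* \mathbf y$. Next I would let $ \mathbf A$ act to the left, using the adjoint identity $( \mathbf {Ax} )^* = \mathbf x^* \mathbf A^* $ together with $ \mathbf A = \mathbf A^* $ and $ \mathbf {Ax} = \lambda \mathbf x$ to obtain $ \mathbf x^* \mathbf A = \mathbf x^* \mathbf A^* = ( \mathbf {Ax} )^* = \bar \lambda \mathbf x^* $. The single point needing care is here, and it is exactly where the previous proposition is invoked: because $ \lambda \in \R$, we have $ \bar \lambda = \lambda $, so the left-action route delivers $ \mathbf x^* \mathbf A \mathbf y = \lambda \, \mathbf x^* \mathbf y$.

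Equating the two evaluations yields $( \lambda - \mu ) \, \mathbf x^* \mathbf y = 0$. Since $ \lambda \ne \mu $ the scalar $ \lambda - \mu $ is nonzero, forcing $ \mathbf x^* \mathbf y = 0$, which is the claimed orthogonality. There is no substantive obstacle beyond remembering to use the reality of the eigenvalues: without it the conjugate $ \bar \lambda $ would survive in the second computation and the two routes would fail to match. I would also flag that both eigenpairs must be eigenpairs of the \emph{same} Hermitian matrix $ \mathbf A$, as the statement presupposes from the context of the preceding proposition.
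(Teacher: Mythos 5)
Your proposal is correct and is essentially the paper's own argument: the paper likewise computes $(\lambda - \mu)\,\mathbf x^* \mathbf y$ by moving $\mathbf A$ between the two factors via $\mathbf A = \mathbf A^*$, using the reality of $\lambda$ (so that $(\mathbf{Ax})^* = \lambda \mathbf x^*$) and then dividing by $\lambda - \mu \ne 0$. Your version merely reorganizes the identical computation as two evaluations of $\mathbf x^* \mathbf A \mathbf y$ rather than a single telescoping difference, and your explicit flagging of where $\bar\lambda = \lambda$ is needed matches the paper's implicit use of the preceding proposition.
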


\begin{proof}
	Because $ \mathbf {Ax} = \lambda \mathbf x$, $ \mathbf {Ay} = \mu \mathbf y$,
	and $ \mathbf A = \mathbf A^* $, one has
\[ ( \lambda - \mu ) \mathbf x^* \mathbf y
	= ( \lambda \mathbf x^* ) \mathbf y - \mathbf x^* ( \mu \mathbf y)
	= ( \mathbf {Ax} )^* \mathbf y - \mathbf x^* ( \mathbf {Ay} ) 
	= \mathbf x^* \mathbf {Ay} - \mathbf x^* \mathbf {Ay} = 0 \]
	But if $ \lambda \ne \mu $ then $ \lambda - \mu \ne 0$, so $ \mathbf x^* \mathbf y = 0$.
\end{proof}

\subsection{Orthogonal Decompositions in \protect{$ \C^n $}}

\begin{definition}
\begin{itemize}
	\item A \emph{linear subspace} $L \subset \C^n $ is a subset
	that is algebraically closed under linear combinations
	--- i.e., if $ \mathbf x, \mathbf y \in L$ and $ \alpha, \beta \in \C$,
	then $ \alpha \mathbf x + \beta \mathbf y \in L$.
	\item Given any linear subspace $L \subset \C^n $,
	define $ \hat L := L \setminus \{ \mathbf 0 \}$ as the set of non-zero vectors in $L$.
	\item Given any set $S \subset \C^n $ of vectors, 
	the set $ \spn S$ is the smallest linear subspace $L \subset \C^n $	
	such that $S \subseteq L$.
	\item Two linear subspaces $L$ and $L'$ of $ \C^n $ are \emph{orthogonal}
	just in case one has $ \mathbf x^* \mathbf y = 0$ 
	for all $ \mathbf x \in L$ and $ \mathbf y \in L'$.
	\item Given any $ \mathbf e \in \C^n \setminus \{ \mathbf 0\} $, let
 \[ [ \mathbf e] := \spn ( \{ \mathbf e \} )
 := \{ \mathbf x \in \C^n \mid \exists c \in \C : \mathbf x = c \, \mathbf e \} \]	
 	denote the one-dimensional linear subspace of $ \C^n $
	that is spanned by the non-zero vector $ \mathbf e$.
\end{itemize}
\end{definition}

\begin{definition}
	The labelled finite family $ \mathcal L^D = \{ L_d \} _{d \in D} $
	of linear subspaces $ L_d $ is:
	\begin{itemize}
		\item \emph{mutually orthogonal} just in case, whenever $d, d' \in D$ with $d \ne d'$,
		 the two spaces $ L_d $ and $ L_{d'} $ are orthogonal;
		\item an \emph{orthogonal decomposition} of $ \C^n $
	just in case the spaces are mutually orthogonal
	and the \emph{direct} or \emph{vector} sum 
 \[ \bigoplus\nolimits _{d \in D} L_d 
    := \left \{ \mathbf x^+ \in \C^n \ \middle \vert \ \forall d \in D; 
	\exists \mathbf x_d \in L_d : 
	\mathbf x^+ = \sum\nolimits _{d \in D} \mathbf x_d \right \} \]
	of all the subspaces $ L_d $ in $ \mathcal L^D $ 
	--- or equivalently, the linear space $ \spn \mathcal L^D $ 
	spanned by $ \{ L_d \} _{d \in D} $	--- is equal to the whole of $ \C^n $.
	\end{itemize}
\end{definition}

\begin{definition}
	For each natural number $m \in \N$, 
	let $ \N_m \subset \N $ denote the set $ \{ 1, 2, \ldots m \}$.
\end{definition}

\begin{example}
	Let $ \{ \mathbf b^k \}_{k \in \N_n } $ be any orthonormal basis of $ \C^n $,
	and let $ \{ M_r \}_{r \in \N_m }$ be any partition of $ \N_n $
	into $m$ pairwise disjoint non-empty sets.
	For each $r \in \N_m $, define $ L_r $ as the linear space of dimension $ \# M_r $
	spanned by the set $ \{ \mathbf b^k \mid k \in M_r \}$ of basis vectors.
	Then $ \bigoplus _{r \in \N_m } L_r $ is an orthogonal decomposition into $m$ subspaces.

	Except in the special case when $ \# M_r = 1$ for all $r \in \N_m $,
	this is entirely different
	from the orthogonal decomposition $ \bigoplus _{k \in \N_n } [ \mathbf b^k ]$
	of $ \C^n $ into the collection of $n$ one-dimensional subspaces $[ \mathbf b^k ]$
	that are each spanned by the respective basis vector $ \mathbf b^k $.
\end{example}

\section{Orthogonal Projections in \protect{$ \C^n $}} \label{s:orthProj}
\subsection{Orthogonal Projection Matrices}

\begin{definition}
	The \emph{orthogonal projection} $ \mathbf x^\perp _L $ of any $ \mathbf x \in \C^n $ 
	onto any linear subspace $L$ of $ \C^n $
	is the unique \emph{closest point} of $L$ to $ \mathbf x$
	--- i.e., it satisfies \( \{ \mathbf x^\perp _L \} 
 := \argmin _{ \mathbf y \in L} \mathbf {(x - y)}^* \mathbf {(x - y)} \).  
\end{definition}

The following is a standard result on orthogonal projection matrices, whose proof is omitted. 

\begin{proposition} \label{prop:projMatrices}
	Let $L$ be any linear subspace of $ \C^n $.
\begin{enumerate}
	\item For any $ \mathbf x \in \C^n $,
	its \emph{orthogonal projection} $ \mathbf x^\perp _L $ onto $L$ 
	is the unique point of $L$
	that satisfies $( \mathbf {x - x}^\perp _L )^* ( \mathbf x^\perp _L - \mathbf y) = 0$ 
	for all $ \mathbf y \in L$.
	\item The mapping $ \C^n \owns \mathbf x \mapsto \mathbf x^\perp _L \in L$ is linear,
	so there exists a \emph{projection matrix} $ \boldsymbol \Pi _L $
	such that $ \boldsymbol \Pi _L \mathbf x = \mathbf x^\perp _L $ 
	for all $ \mathbf x \in \C^n $.  
	\item The projection matrix $ \boldsymbol \Pi _L $ 
	satisfies $ \boldsymbol \Pi _L ^2 = \boldsymbol \Pi _L = \boldsymbol \Pi _L^* $.
	\item If $ \tilde L$ is any linear subspace of $ \C^n $ that is orthogonal to $L$,
	then $ \boldsymbol \Pi _L + \boldsymbol \Pi _{ \tilde L} 
	= \boldsymbol \Pi _{L \oplus \tilde L} $.
	\item If $ \{ \mathbf b^k \}_{k = 1}^m $ is any orthonormal basis of $L$,
	then $ \boldsymbol \Pi _L = \sum _{k = 1}^m \boldsymbol \Pi _{[ \mathbf b^k ]} $
	where each $[ \mathbf b^k ]$ denotes the one-dimensional subspace
	spanned by the basis vector $ \mathbf b^k $.
	\item For all $ \mathbf x \in \C^n $, one has
 \( \mathbf x^* \boldsymbol \Pi _L \mathbf x 
 = \mathbf x^* \boldsymbol \Pi^* _L \boldsymbol \Pi _L \mathbf x 
 = \| \boldsymbol \Pi _L \mathbf x \|^2 \ge 0 \).  
\end{enumerate}
\end{proposition}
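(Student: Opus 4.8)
The plan is to treat part (1) as the foundation on which the remaining five parts rest, since each of parts (2)--(6) reduces to the orthogonality characterization together with the uniqueness it supplies. For part (1), I would first observe that, because $L$ is a linear subspace containing $\mathbf{x}^\perp_L$, the stated condition $(\mathbf{x} - \mathbf{x}^\perp_L)^*(\mathbf{x}^\perp_L - \mathbf{y}) = 0$ for all $\mathbf{y} \in L$ is equivalent to the cleaner ``residual orthogonality'' condition that $(\mathbf{x} - \mathbf{x}^\perp_L)^*\mathbf{z} = 0$ for every $\mathbf{z} \in L$; indeed, as $\mathbf{y}$ ranges over $L$ the difference $\mathbf{x}^\perp_L - \mathbf{y}$ ranges over all of $L$. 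To extract this condition from the defining minimization, I would fix $\mathbf{z} \in L$, use the perturbed point $\mathbf{x}^\perp_L + c\,\mathbf{z} \in L$ for a complex scalar $c$, and expand $\| \mathbf{x} - \mathbf{x}^\perp_L - c\,\mathbf{z} \|^2 \ge \| \mathbf{x} - \mathbf{x}^\perp_L \|^2$. Writing $a = \mathbf{z}^*(\mathbf{x} - \mathbf{x}^\perp_L)$ and choosing $c = t\,a$ with real $t \to 0^+$ reduces the inequality to $-2t|a|^2 + t^2 |a|^2 \|\mathbf{z}\|^2 \ge 0$, which forces $a = 0$. Uniqueness then follows by subtracting the orthogonality conditions for two candidate points and reading off that the squared norm of their difference vanishes; existence of the minimizer can be taken from the standard Hilbert-space projection theorem, or constructed by applying Gram--Schmidt to $L$ and checking orthogonality directly.

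Given part (1), part (2) is immediate: writing $\mathbf{p}_1, \mathbf{p}_2$ for the projections onto $L$ of $\mathbf{x}_1, \mathbf{x}_2$, the combination $\alpha \mathbf{p}_1 + \beta \mathbf{p}_2$ lies in $L$ and the residual $\alpha(\mathbf{x}_1 - \mathbf{p}_1) + \beta(\mathbf{x}_2 - \mathbf{p}_2)$ is orthogonal to $L$, so uniqueness identifies it as the projection of $\alpha \mathbf{x}_1 + \beta \mathbf{x}_2$; linearity then yields the matrix $\boldsymbol{\Pi}_L$. For part (3), idempotence holds because any point of $L$ is its own closest point in $L$, giving $\boldsymbol{\Pi}_L(\boldsymbol{\Pi}_L \mathbf{x}) = \boldsymbol{\Pi}_L \mathbf{x}$; and self-adjointness follows by splitting both arguments of $\mathbf{x}^* \boldsymbol{\Pi}_L \mathbf{y}$ into projection plus residual and applying residual orthogonality twice, to get $\mathbf{x}^* \boldsymbol{\Pi}_L \mathbf{y} = (\boldsymbol{\Pi}_L \mathbf{x})^* \boldsymbol{\Pi}_L \mathbf{y} = (\boldsymbol{\Pi}_L \mathbf{x})^* \mathbf{y} = \mathbf{x}^* \boldsymbol{\Pi}_L^* \mathbf{y}$ for all $\mathbf{x}, \mathbf{y}$.

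For part (4) I would again verify the characterization of part (1): setting $\mathbf{p} = \boldsymbol{\Pi}_L \mathbf{x} + \boldsymbol{\Pi}_{\tilde{L}} \mathbf{x} \in L \oplus \tilde{L}$ and testing against an arbitrary $\mathbf{z} = \mathbf{z}_1 + \mathbf{z}_2$ with $\mathbf{z}_1 \in L$ and $\mathbf{z}_2 \in \tilde{L}$, the hypothesis $L \perp \tilde{L}$ kills the cross terms and leaves $(\mathbf{x} - \mathbf{p})^* \mathbf{z} = 0$, so uniqueness gives $\mathbf{p} = \boldsymbol{\Pi}_{L \oplus \tilde{L}} \mathbf{x}$ for every $\mathbf{x}$. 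Part (5) then follows by induction on part (4), since an orthonormal basis $\{\mathbf{b}^k\}_{k=1}^m$ of $L$ exhibits $L$ as the orthogonal direct sum $\bigoplus_{k=1}^m [\mathbf{b}^k]$. Finally, part (6) is a one-line consequence of part (3), since $\boldsymbol{\Pi}_L = \boldsymbol{\Pi}_L^* = \boldsymbol{\Pi}_L^2 = \boldsymbol{\Pi}_L^* \boldsymbol{\Pi}_L$ gives $\mathbf{x}^* \boldsymbol{\Pi}_L \mathbf{x} = (\boldsymbol{\Pi}_L \mathbf{x})^* (\boldsymbol{\Pi}_L \mathbf{x}) = \| \boldsymbol{\Pi}_L \mathbf{x} \|^2 \ge 0$.

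The main obstacle is confined to part (1): obtaining exact orthogonality rather than just a stationarity condition requires the complex perturbation above, since a naive real-variable derivative would only control the real part of the inner product. Once part (1) is in place, everything after it is bookkeeping with the residual-orthogonality condition and the uniqueness it provides.
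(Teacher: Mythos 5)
Your proof is correct, but there is nothing in the paper to compare it against: the paper introduces this proposition as ``a standard result on orthogonal projection matrices, whose proof is omitted,'' so your argument supplies what the paper deliberately leaves out. The route you take is the standard one --- establish the orthogonality characterization in part (1) by perturbing the minimizer, then reduce parts (2)--(6) to that characterization together with uniqueness --- and you handle the one genuine subtlety correctly: choosing the complex perturbation $c = t\,a$ with $a = \mathbf z^*(\mathbf x - \mathbf x^\perp_L)$, rather than a real scalar, so that the full complex inner product (not merely its real part) is forced to vanish. The only step you defer, existence of the closest point, is legitimately deferrable here, since the paper's definition of $\mathbf x^\perp_L$ via $\argmin$ already presupposes it.
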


\subsection{One-Dimensional Orthogonal Projections}

The following proposition characterizes orthogonal projections 
onto one-dimensional linear subspaces of $ \C^n $.
 
\begin{proposition} \label{prop:oneDimProj}
	Given any $n$-vector $ \mathbf e$ in the unit sphere $ \mathbb S$ of $ \C^n $,
	the $n \times n$ matrix $ \mathbf P := \mathbf e \, \mathbf e^* $ is Hermitian
	and represents the orthogonal projection $ \Pi _{[ \mathbf e] }$ of $ \C^n $
	onto the one-dimensional subspace $[ \mathbf e] = \spn ( \{ \mathbf e \} )$.
\end{proposition}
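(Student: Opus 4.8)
The plan is to split the statement into its two assertions and prove each separately, leaning on the adjoint identity $( \mathbf A \mathbf B)^* = \mathbf B^* \mathbf A^* $ recorded earlier and on the characterization of the orthogonal projection supplied by part 1 of Proposition \ref{prop:projMatrices}. The Hermitian claim is immediate: I would compute $ \mathbf P^* = ( \mathbf e \, \mathbf e^* )^* = ( \mathbf e^* )^* \mathbf e^* = \mathbf e \, \mathbf e^* = \mathbf P$, using that applying the adjoint twice returns the original matrix, so $( \mathbf e^* )^* = \mathbf e$.

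For the projection claim, I would first observe that for any $ \mathbf x \in \C^n $ the product $ \mathbf P \mathbf x = \mathbf e \, ( \mathbf e^* \mathbf x) $ equals the scalar $ \gamma := \mathbf e^* \mathbf x$ times the vector $ \mathbf e$, so that $ \mathbf P \mathbf x \in [ \mathbf e]$; this places the candidate image inside the target subspace. It then remains to verify the orthogonality condition of Proposition \ref{prop:projMatrices}(1), namely that $( \mathbf x - \mathbf P \mathbf x)^* ( \mathbf P \mathbf x - \mathbf y) = 0$ for every $ \mathbf y \in [ \mathbf e]$. Writing an arbitrary such $ \mathbf y = c \, \mathbf e$ with $c \in \C$, I would expand the product, factor out the scalar $( \gamma - c) $, and reduce the remaining factor to $ \mathbf x^* \mathbf e - \bar \gamma \, ( \mathbf e^* \mathbf e) $. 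The two facts that close the argument are the unit-sphere hypothesis $ \mathbf e^* \mathbf e = \| \mathbf e \|^2 = 1$ and the conjugate-symmetry identity $ \mathbf x^* \mathbf e = \overline{ \mathbf e^* \mathbf x} = \bar \gamma $, which together make this factor vanish. Since $ \mathbf P \mathbf x$ both lies in $[ \mathbf e]$ and satisfies the defining orthogonality relation, the uniqueness clause of Proposition \ref{prop:projMatrices}(1) forces $ \mathbf P \mathbf x = \mathbf x^\perp _{[ \mathbf e]} $ for every $ \mathbf x$, that is, $ \mathbf P = \boldsymbol \Pi _{[ \mathbf e]} $.

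I expect no genuine obstacle here; the only step demanding care is the bookkeeping that keeps straight the scalar-versus-vector roles of $ \mathbf e^* \mathbf x$ and its conjugate, combined with the consistent use of the normalization $ \mathbf e^* \mathbf e = 1$ --- dropping the unit-sphere assumption would leave $ \mathbf P$ as a positive scalar multiple of the projection rather than the projection itself. An alternative route would invoke the idempotent-plus-self-adjoint characterization, checking $ \mathbf P^2 = \mathbf e \, ( \mathbf e^* \mathbf e) \, \mathbf e^* = \mathbf P$ and separately identifying the range of $ \mathbf P$ as $[ \mathbf e]$; but the direct verification against Proposition \ref{prop:projMatrices}(1) is cleaner and keeps the role of the normalization explicit.
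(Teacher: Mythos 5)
Your proof is correct, but it follows a genuinely different route from the paper's own. The paper's proof is precisely the ``alternative route'' you mention and set aside at the end: it checks that $\mathbf P := \mathbf e\,\mathbf e^*$ is Hermitian, that it is idempotent, $\mathbf P^2 = \mathbf e\,(\mathbf e^*\mathbf e)\,\mathbf e^* = \mathbf P$ (using $\mathbf e^*\mathbf e = 1$), and that $\mathbf P\mathbf x = (\mathbf e^*\mathbf x)\,\mathbf e \in [\mathbf e]$ for every $\mathbf x$, and from these three facts concludes $\mathbf P = \boldsymbol \Pi_{[\mathbf e]}$. You instead fix $\mathbf x$ and verify the closest-point characterization of part 1 of Proposition \ref{prop:projMatrices} directly: $\mathbf P\mathbf x = \gamma\,\mathbf e \in [\mathbf e]$ with $\gamma = \mathbf e^*\mathbf x$, and for every $\mathbf y = c\,\mathbf e$ one has $(\mathbf x - \mathbf P\mathbf x)^*(\mathbf P\mathbf x - \mathbf y) = (\gamma - c)\bigl(\mathbf x^*\mathbf e - \bar\gamma\,\mathbf e^*\mathbf e\bigr) = 0$ because $\mathbf x^*\mathbf e = \overline{\mathbf e^*\mathbf x} = \bar\gamma$ and $\mathbf e^*\mathbf e = 1$; the uniqueness clause then forces $\mathbf P\mathbf x = \mathbf x^\perp_{[\mathbf e]}$ pointwise. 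Your route is slightly longer but more self-contained: the paper's final inference silently relies on the standard fact that a Hermitian idempotent matrix is the orthogonal projection onto its range (a fact the paper never states or proves), and strictly speaking it also needs the range to be all of $[\mathbf e]$ rather than a proper subspace such as $\{\mathbf 0\}$ (immediate from $\mathbf P\mathbf e = \mathbf e$, but unremarked). Your pointwise verification needs neither of these, at the cost of quantifying over $\mathbf y \in [\mathbf e]$ and some conjugation bookkeeping; the paper's route, in exchange, is shorter and purely algebraic, and its idempotency computation echoes the characterization already recorded in part 3 of Proposition \ref{prop:projMatrices}.
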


\begin{proof}
	For each $ \mathbf e \in \mathbb S$,
	the $n \times n$ matrix $\mathbf P := \mathbf e \, \mathbf e^* $ satisfies:
\begin{enumerate}
	\item \( \mathbf P^* = ( \mathbf e \, \mathbf e^* )^* = ( \mathbf e^* )^* \mathbf e^* 
	= \mathbf e \, \mathbf e^* = \mathbf P \), so $ \mathbf P$ is Hermitian;
	\item Because matrix multiplication satisfies the associative law,
	and also $ \mathbf e^* \mathbf e = 1$, one has
 \[ \mathbf P^2 = ( \mathbf e \, \mathbf e^* )( \mathbf e \, \mathbf e^* )
	= \mathbf e ( \mathbf e^* \mathbf e ) \mathbf e^* = \mathbf e \, \mathbf e^* = \mathbf P \]
	This implies that $ \mathbf P$ is an orthogonal projection. 
	\item For all $ \mathbf x \in \C^n $, 
	because matrix multiplication satisfies the associative law,
	and $c := \mathbf e^* \mathbf x$ is a scalar in $\C$, one has
\[ \mathbf P \, \mathbf x = ( \mathbf e \, \mathbf e^* ) \mathbf x 
	= \mathbf e \, ( \mathbf e^* \mathbf x) = \mathbf e \, c 
	= c \, \mathbf e \in [ \mathbf e] \]
	It follows that $ \mathbf P$ is the orthogonal projection $ \Pi _{[ \mathbf e] }$ 
	of $ \C^n $ onto $[ \mathbf e] $. \qedhere
\end{enumerate}
\end{proof}

\subsection{Ortho-Partitions and Ortho-Measurability in $ \C^n $} \label{ss:orthoMeas}

\begin{proposition}
	Given any orthogonal decomposition $ \bigoplus _{d \in D} L_d $ of $ \C^n $:
	\begin{enumerate}
		\item any two different subsets 
		in the finite union \( \cup _{d \in D} \hat L_d \) are disjoint;
 		\item any family of vectors $ \{ \mathbf x_d \}_{d \in D} $ 
		with $ \mathbf x_d \in \hat L_d $ for each $d \in D$ is linearly independent, 
		and $ \# D \le n$.
	\end{enumerate}
\end{proposition}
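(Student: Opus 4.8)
The plan is to prove both parts by exploiting the defining properties of an orthogonal decomposition, namely that the subspaces are mutually orthogonal and that together they span $\C^n$. Throughout I would use the inner-product characterization of orthogonality: if $L_d$ and $L_{d'}$ are orthogonal subspaces with $d \ne d'$, then $\mathbf x^* \mathbf y = 0$ whenever $\mathbf x \in L_d$ and $\mathbf y \in L_{d'}$.

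For part (1), I would argue by contradiction. Suppose a nonzero vector $\mathbf x$ lies in both $\hat L_d$ and $\hat L_{d'}$ with $d \ne d'$. Since $\mathbf x \in L_d$ and $\mathbf x \in L_{d'}$, orthogonality of the two subspaces forces $\mathbf x^* \mathbf x = 0$. But $\mathbf x \ne \mathbf 0$ implies $\mathbf x^* \mathbf x = \|\mathbf x\|^2 > 0$, a contradiction. Hence $\hat L_d \cap \hat L_{d'} = \emptyset$ for $d \ne d'$, which is exactly the claimed disjointness.

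For part (2), I would show linear independence directly. Suppose $\sum_{d \in D} c_d \mathbf x_d = \mathbf 0$ for scalars $c_d \in \C$, where each $\mathbf x_d \in \hat L_d$. Fix any index $d' \in D$ and take the inner product of this relation with $\mathbf x_{d'}$, i.e.\ left-multiply by $\mathbf x_{d'}^*$. By linearity this gives $\sum_{d \in D} c_d \, \mathbf x_{d'}^* \mathbf x_d = 0$. For every $d \ne d'$ the mutual orthogonality yields $\mathbf x_{d'}^* \mathbf x_d = 0$, so only the $d = d'$ term survives, leaving $c_{d'} \, \mathbf x_{d'}^* \mathbf x_{d'} = 0$. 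Since $\mathbf x_{d'} \ne \mathbf 0$ gives $\mathbf x_{d'}^* \mathbf x_{d'} > 0$, we conclude $c_{d'} = 0$. As $d'$ was arbitrary, every coefficient vanishes, establishing linear independence. The bound $\#D \le n$ then follows immediately, because a linearly independent family of $\#D$ vectors in the $n$-dimensional space $\C^n$ can have at most $n$ members.

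I do not expect any genuine obstacle here: both parts reduce to the single fact that orthogonality plus nonzero norm is incompatible, together with the standard cardinality bound on linearly independent sets. The only point needing mild care is ensuring the inner-product extraction in part (2) isolates exactly one term, which is guaranteed precisely because the family is indexed so that distinct indices correspond to mutually orthogonal subspaces; the hypothesis $\mathbf x_d \in \hat L_d$ (rather than merely $\mathbf x_d \in L_d$) is what makes each diagonal inner product strictly positive and hence forces the coefficients to vanish.
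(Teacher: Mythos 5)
Your proposal is correct and takes essentially the same approach as the paper: part (1) is the identical argument, and in part (2) you pair the vanishing linear combination $\sum_{d \in D} c_d \mathbf x_d = \mathbf 0$ with each $\mathbf x_{d'}$ separately to extract one coefficient at a time, whereas the paper computes the squared norm of the whole sum to obtain $\sum_{d \in D} | \alpha_d |^2 \, \mathbf x_d^* \mathbf x_d = 0$ and kills all coefficients at once. Both variants rest on exactly the same facts --- mutual orthogonality plus strict positivity of $\mathbf x^* \mathbf x$ for $\mathbf x \ne \mathbf 0$ --- so this is a cosmetic difference, not a different route; your concluding dimension argument for $\# D \le n$ matches the paper's as well.
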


\begin{proof}
	First, suppose that $ \mathbf x \in L_d \cap L_{d'} $ where $d \ne d'$.
	Because $ L_d $ and $ L_{d'} $ are orthogonal, one has $ \mathbf x^* \mathbf x = 0$,
	and so $ \mathbf {x = 0}$.
	It follows that $ \hat L_d $ and $ \hat L_{d'} $ are disjoint.
	
	Second, suppose that $ \mathbf 0 = \sum _{d \in D} \alpha _d \mathbf x_d $
	where $ \alpha _d \in \C$ and $ \mathbf x_d \in \hat L_d $ for all $d \in D$.
	Then, because orthogonality implies that $ \mathbf x_d ^* \mathbf x_{d'} = 0$ 
	whenever $d \ne d'$, one has
	\begin{align*}
 0 = \left( \sum\nolimits _{d \in D} \alpha _d \mathbf x_d \right) ^*
 		\left( \sum\nolimits _{d \in D} \alpha _d \mathbf x_d \right) 
	  &= \sum\nolimits _{d \in D} ( \bar \alpha _d \alpha _d  ) \mathbf x^*_d \mathbf x_d \\
	  &= \sum\nolimits _{d \in D} | \alpha _d | ^2 \mathbf x^*_d \mathbf x_d
	\end{align*}
	But $ \mathbf x_d ^* \mathbf x_d > 0$ for all $d \in D$, 
	so $ \alpha _d = 0$ for all $d \in D$.
	According to the standard definition, therefore,
	the family $ \{ \mathbf x_d \}_{d \in D} $ of vectors is linearly independent.
	Then, because the dimension $n$ is the maximum size 
	of any linearly independent subset of $ \C^n $,	one must have $ \# D \le n$.  
\end{proof}

\begin{definition} \label{def:orthoAlgebra}
	Let $ \bigoplus _{d \in D} L_d $ be any orthogonal decomposition of $ \C^n $. 
\begin{enumerate}
	\item The \emph{residual set} $ R^D $ of all vectors that are omitted 
	from the family $ \{ \hat L_d \mid d \in D \} $ of non-zero members 
	of the sets that make up the orthogonal decomposition 
	is defined by \( R^D := \C^n \setminus \cup _{d \in D} \hat L_d \).
	\item The \emph{ortho-partition} of $ \C^n $ induced by $ \bigoplus _{d \in D} L_d $ 
	is the partition 
\begin{equation} \label{eq:defFD}
	 \mathfrak P^D := \{ R^D \} \cup \left( \cup _{d \in D} \hat L_d \right)
\end{equation}
	\item The \emph{ortho-algebra} $ \Sigma ^D $ 
	is the $ \sigma $-algebra $ \sigma ( \mathfrak P^D )$ 
	of all subsets of $ \C^n $ generated by the cells in the ortho-partition $ \mathfrak P^D $.
\end{enumerate}
\end{definition} 

Given the ortho-partition $ \mathfrak P^D $ of $ \C^n $ defined by \eqref{eq:defFD}, 
let $ \mathfrak F^D $ denote the family of all collections $ \mathfrak E^D $ 
of cells $E$ that make up the partition $ \mathfrak P^D $.
We state the following Lemma without proof:   

\begin{lemma}
	The $ \sigma $-algebra $ \sigma ( \mathfrak P^D )$ generated by $ \mathfrak P^D $ 
	is the collection of all subsets of $ \C^n $ which, 
	for some collection $ \mathfrak E^D \in \mathfrak F^D $, take the form 
	of the union $ \cup _{E \in \mathfrak E^D } E$ of the cells in $ \mathfrak E^D $.
\end{lemma}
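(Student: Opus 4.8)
The plan is to prove the two inclusions separately. Write $\mathcal{A}$ for the family of all sets of the form $\cup_{E \in \mathfrak{E}^D} E$ with $\mathfrak{E}^D \in \mathfrak{F}^D$, so that the claim is exactly $\sigma(\mathfrak{P}^D) = \mathcal{A}$. The inclusion $\mathcal{A} \subseteq \sigma(\mathfrak{P}^D)$ is immediate: each cell $E$ of $\mathfrak{P}^D$ belongs to the generating class and hence to $\sigma(\mathfrak{P}^D)$; and since $\#D \le n$ by the preceding proposition, the partition $\mathfrak{P}^D$ has at most $n+1$ cells, so every member of $\mathcal{A}$ is a finite union of elements of $\sigma(\mathfrak{P}^D)$ and therefore lies in $\sigma(\mathfrak{P}^D)$.

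For the reverse inclusion I would establish that $\mathcal{A}$ is itself a $\sigma$-algebra containing every cell of $\mathfrak{P}^D$; minimality of $\sigma(\mathfrak{P}^D)$ then forces $\sigma(\mathfrak{P}^D) \subseteq \mathcal{A}$. Containment of the cells is clear, since each singleton subcollection $\mathfrak{E}^D = \{E\}$ gives $E \in \mathcal{A}$. To verify the axioms I would invoke repeatedly the fact that $\mathfrak{P}^D$ is a genuine partition of $\C^n$: its cells $\{R^D\} \cup \{\hat L_d\}_{d \in D}$ are pairwise disjoint (by the earlier proposition that the $\hat L_d$ are mutually disjoint, together with the definition of $R^D$ as the complement of their union) and they exhaust $\C^n$.

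The three axioms then follow mechanically. Taking $\mathfrak{E}^D = \mathfrak{P}^D$ gives $\C^n \in \mathcal{A}$. For closure under complementation, disjointness and exhaustiveness yield $\C^n \setminus \left( \cup_{E \in \mathfrak{E}^D} E \right) = \cup_{E \in \mathfrak{P}^D \setminus \mathfrak{E}^D} E$, again a union of cells and hence in $\mathcal{A}$. For closure under countable unions, a countable family of members of $\mathcal{A}$ corresponds to subcollections $\mathfrak{E}^D_j \subseteq \mathfrak{P}^D$, and $\cup_j \left( \cup_{E \in \mathfrak{E}^D_j} E \right) = \cup_{E \in \cup_j \mathfrak{E}^D_j} E$ is the union of the cells in $\cup_j \mathfrak{E}^D_j$, so it too lies in $\mathcal{A}$.

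No genuine obstacle is expected: this is the standard description of the $\sigma$-algebra generated by a finite partition. The only step requiring care is complementation, which is precisely where the partition structure of $\mathfrak{P}^D$ — both disjointness and exhaustiveness — is indispensable; everywhere else the finiteness of $\mathfrak{P}^D$ collapses even the countable-union axiom into a finite one.
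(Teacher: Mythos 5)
The paper states this lemma explicitly \emph{without} proof, so there is no argument of the paper's own to compare against; your proposal simply supplies the standard missing argument, and it is correct. Both directions are handled properly: finiteness of the partition (at most $n+1$ cells, via $\#D \le n$) turns unions of subcollections into finite unions for the easy inclusion, and your verification that the family $\mathcal A$ of such unions is itself a $\sigma$-algebra correctly isolates the complementation step as the one place where both disjointness and exhaustiveness of the cells of $\mathfrak P^D$ are needed. One pedantic remark: for the complement of $\C^n$ (equivalently, for $\emptyset \in \mathcal A$) your argument implicitly requires that the empty subcollection be admitted in $\mathfrak F^D$, so that the empty union $\emptyset$ counts as a set ``of the stated form''; this is the natural reading of the paper's definition of $\mathfrak F^D$, but it is worth saying once.
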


\subsection{The Spectral Decomposition of a Hermitian Matrix}

\begin{definition} \label{def:eigenpairs}
	Given any $n \times n$ Hermitian matrix $ \mathbf A$ 
	and any of its eigenvalues $ \lambda \in s^{ \mathbf A} $, let: 
\begin{enumerate}
	\item \( E_\lambda := \{ \mathbf x \in \C^n \setminus \{ \mathbf 0\} 
	\mid \mathbf A \mathbf x = \lambda \mathbf x \} \)
	be the corresponding non-empty set of its eigenvectors;
	\item \( L_\lambda := E_\lambda \cup \{ \mathbf 0\} \) 
	be the corresponding (linear) \emph{eigenspace}.
\end{enumerate}
\end{definition}

Evidently Definition \ref{def:eigenpairs} implies that $ E_\lambda = \hat L_\lambda $
for all $ \lambda \in s^{ \mathbf A} $.

Our first main theorem in Section \ref{ss:orthMeasFns} will make use of the concept
of numerically identified orthogonal decompositions, which are defined as follows:

\begin{definition} \label{def:numLabelledOrthDecomp}
	The orthogonal decomposition $ \bigoplus _{d \in D} L_d $ of $ \C^n $ 
	is \emph{numerically identified} just in case 
	there exist a finite subset $ \Lambda \subset \R$ of numerical identifiers
	and a bijection $D \owns d \longleftrightarrow \lambda _d \in \Lambda $. 
\end{definition}

Our next result, which we also state without proof, 
is a version of the standard spectral theorem 
that applies in the finite-dimensional space $ \C^n $.

\begin{proposition} \label{prop:spectralThm}
	Any $n \times n$ Hermitian matrix $ \mathbf A$:
\begin{enumerate}
	\item induces a numerically identified orthogonal decomposition 
 \( \bigoplus\nolimits _{ \lambda \in s^{ \mathbf A} } L_\lambda \) of $ \C^n $
	into linear subspaces $ L_\lambda = E_\lambda \cup \{ \mathbf 0\} $
	that correspond to the eigenspaces $ E_\lambda $ of $ \mathbf A$,
	one for each $ \lambda \in s^{ \mathbf A} $ in the spectrum of $ \mathbf A$;
	\item has a \emph{spectral decomposition} into the eigenvalue-weighted linear combination 
\begin{equation} \label{eq:spectralDecomp}
	\mathbf A = \sum\nolimits _{ \lambda \in s^{ \mathbf A} } 
	\lambda \, \boldsymbol \Pi _{ L_\lambda }
\end{equation}
	of orthogonal projections $ \boldsymbol \Pi _{ L_\lambda }$ 
	onto the corresponding mutually orthogonal eigenspaces $ L_\lambda $.
\end{enumerate}
\end{proposition}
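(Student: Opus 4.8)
The plan is to establish the two parts in turn, the analytic heart being the completeness claim that the eigenspaces exhaust $ \C^n $. Mutual orthogonality of the family $ \{ L_\lambda \} _{\lambda \in s^{\mathbf A}} $ is immediate from Proposition \ref{orthogSpaces}, since any two eigenvectors belonging to distinct eigenvalues are orthogonal. The numerical identification then comes for free: the spectrum $ s^{\mathbf A} \subset \R $ itself serves as the set $ \Lambda $ of identifiers, with the identity map $ \lambda \longleftrightarrow \lambda $ furnishing the required bijection of Definition \ref{def:numLabelledOrthDecomp}.

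The crux is to show that the span $ V := \bigoplus _{\lambda \in s^{\mathbf A}} L_\lambda $ of all eigenspaces is the whole of $ \C^n $. I would argue by contradiction using the orthogonal complement $ V^\perp := \{ \mathbf x \in \C^n \mid \mathbf x^* \mathbf y = 0 \text{ for all } \mathbf y \in V \} $. First, $ V $ is clearly $ \mathbf A$-invariant, since each eigenspace $ L_\lambda $ is mapped into itself. Invoking Hermiticity, $ V^\perp $ is $ \mathbf A$-invariant as well: for $ \mathbf x \in V^\perp $ and any $ \mathbf y \in V $ one has $ \mathbf y^* ( \mathbf A \mathbf x) = ( \mathbf A \mathbf y)^* \mathbf x = 0 $ because $ \mathbf A \mathbf y \in V $, so $ \mathbf A \mathbf x \in V^\perp $. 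Now suppose $ V^\perp \ne \{ \mathbf 0 \} $. Since $ \C $ is algebraically closed, the characteristic polynomial of the restriction of $ \mathbf A $ to $ V^\perp $ --- still a Hermitian operator, now on a nonzero space --- has a root, yielding an eigenvalue $ \mu $ and a nonzero eigenvector $ \mathbf v \in V^\perp $. But $ ( \mu, \mathbf v) $ is then an eigenpair of $ \mathbf A $, so $ \mathbf v \in L_\mu \subseteq V $, whence $ \mathbf v \in V \cap V^\perp = \{ \mathbf 0 \} $ --- a contradiction. Hence $ V^\perp = \{ \mathbf 0 \} $ and $ V = \C^n $, which completes part (1).

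For the spectral-decomposition formula of part (2), I would first extract an orthonormal eigenbasis by choosing an orthonormal basis of each eigenspace $ L_\lambda $ and concatenating; by the orthogonality of distinct eigenspaces this yields an orthonormal basis $ \{ \mathbf b^k \} _{k = 1}^n $ of $ \C^n $ consisting of eigenvectors, with $ \mathbf A \mathbf b^k = \lambda(k) \mathbf b^k $. Since the $ \mathbf b^k $ form a basis, it suffices to check that $ \mathbf A $ and $ \sum _{k = 1}^n \lambda(k) \, \mathbf b^k ( \mathbf b^k)^* $ agree on each $ \mathbf b^k $, where each $ \mathbf b^k ( \mathbf b^k)^* $ is the one-dimensional projection $ \boldsymbol \Pi _{[ \mathbf b^k]} $ by Proposition \ref{prop:oneDimProj}. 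Grouping the basis vectors according to the eigenvalue they belong to and applying part 5 of Proposition \ref{prop:projMatrices} --- which sums the one-dimensional projections over an orthonormal basis of $ L_\lambda $ into $ \boldsymbol \Pi _{L_\lambda} $ --- collapses this sum to $ \sum _{\lambda \in s^{\mathbf A}} \lambda \, \boldsymbol \Pi _{L_\lambda} $, as required by \eqref{eq:spectralDecomp}.

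The main obstacle is the completeness step, since everything else is either cited directly --- orthogonality from Proposition \ref{orthogSpaces}, the projection identities from Propositions \ref{prop:oneDimProj} and \ref{prop:projMatrices} --- or a routine verification on basis vectors. Within that step the two delicate ingredients are the $ \mathbf A$-invariance of $ V^\perp $, which is precisely where self-adjointness is used, and the appeal to the fundamental theorem of algebra guaranteeing that the Hermitian restriction on any nonzero subspace genuinely possesses an eigenvector.
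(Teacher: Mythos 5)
Your proof is correct, but note that the paper itself offers nothing to compare it against: Proposition \ref{prop:spectralThm} is explicitly ``stated without proof'' as a standard version of the finite-dimensional spectral theorem, so you have supplied the argument the paper deliberately omits. What you give is the classical invariant-subspace argument, and it hangs together: orthogonality of distinct eigenspaces comes from Proposition \ref{orthogSpaces}; the completeness step correctly uses Hermiticity to show $V^\perp$ is $\mathbf A$-invariant and the algebraic closedness of $\C$ to produce an eigenvector inside any nonzero $V^\perp$, forcing $V = \C^n$; and part (2) is a clean verification on an orthonormal eigenbasis, collapsed to the eigenspace projections via part 5 of Proposition \ref{prop:projMatrices}. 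Two small points you could tighten. First, the appeal to ``the characteristic polynomial of the restriction'' tacitly requires representing $\mathbf A\restriction_{V^\perp}$ as a matrix with respect to a chosen orthonormal basis of $V^\perp$; also, as you half-observe, Hermiticity of the restriction is not actually needed there --- existence of a root of the characteristic polynomial over $\C$ suffices. Second, the numerical identification via the identity map $\lambda \longleftrightarrow \lambda$ needs $s^{\mathbf A} \subset \R$, which is exactly the paper's first proposition (realness of eigenvalues of a Hermitian matrix); citing it would close that loop, since Definition \ref{def:numLabelledOrthDecomp} requires the identifiers to be real numbers.
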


\section{Quantum Observables as Measurable Functions} \label{s:firstBijection}
\subsection{From a Hermitian Matrix to Its Eigen-Pairing}

Our construction of an ortho-measurable function 
that corresponds to a quantum observable in the form of a Hermitian matrix $ \mathbf A$
will involve an extension of the real line $\R$ that adds an extra element $*$.
Following terminology which is common in computer science,
the extra element $*$ could be read as ``not a number'', often denoted by ``NaN''.
The element $*$ is used in defining an eigenpairing 
as a mapping  \( \C^n \owns \mathbf x \mapsto f^{ \mathbf A} ( \mathbf x) \) 
whose value is only a well-defined real number 
in case $ \mathbf x$ is an eigenvector of~$ \mathbf A$.

\begin{definition} \label{defn:eigenPairing}
Given any $n \times n$ Hermitian matrix $ \mathbf A$ with spectral decomposition 
 \( \mathbf A 
 = \sum _{ \lambda \in s^{ \mathbf A} } \lambda \, \boldsymbol \Pi _{ L_\lambda } \), 
as in \eqref{eq:spectralDecomp}, define:
 \begin{enumerate}
	\item for each eigenvalue $ \lambda \in s^{ \mathbf A}$
	and unique associated linear eigenspace $ L_\lambda $, 
	the set $ E_\lambda := L_\lambda \setminus \{ \mathbf 0 \} $ of corresponding eigenvectors, 
	as well as the \emph{indicator function} 
 \( \C^n \owns \mathbf x \mapsto 1_{ E_\lambda } ( \mathbf x) \in \{0, 1\} \) 
which is defined so that
 \( 1_{ E_\lambda } ( \mathbf x) = 1 \Longleftrightarrow \mathbf x \in E_\lambda \);
		\item the \emph{residual set}
 \( R^{ \mathbf A} := \C^n \setminus \cup _{ \lambda \in s^{ \mathbf A} } E_\lambda \) 
 	of $n$-vectors (including $ \mathbf 0$)
	that are not eigenvectors of $ \mathbf A$, for any of its eigenvalues;  
	\item the \emph{one-point extension} $ \R \cup \{*\} $ of the real line, 
	where $ * \not\in \R $;
	\item the \emph{eigen-pairing} of $ \mathbf A$ as the map
 \( \C^n \owns \mathbf x \mapsto f^{ \mathbf A} ( \mathbf x) \in \R \cup \{*\} \)
 	 defined by 
\begin{equation} \label{eq:defEigenPairs}
	f^{ \mathbf A} ( \mathbf x) := \begin{cases}
 	  \sum _{ \lambda \in s^{ \mathbf A}} \lambda \, 1_{ E_\lambda } ( \mathbf x)
  	& \text{if $ \mathbf x \in \C^n \setminus R^{ \mathbf A} $}; \\
  * & \text{if $ \mathbf x \in R^{ \mathbf A} $}.
\end{cases}
\end{equation}
\end{enumerate}
\end{definition}

By Proposition \ref{orthogSpaces}, 
the eigenspaces $ E_\lambda = L_\lambda \setminus \{ \mathbf 0 \}$ 
of any $n \times n$ Hermitian matrix $ \mathbf A$ are orthogonal, 
and so disjoint, for different values of~$ \lambda $.
Hence \eqref{eq:defEigenPairs} implies that 
\begin{equation} \label{eq:keyEq}
	f^{ \mathbf A} ( \mathbf x) = \lambda \Longleftrightarrow \mathbf x \in E_\lambda 
  \Longleftrightarrow \mathbf x \ne \mathbf 0 \ \text{and} \ \mathbf {Ax} = \lambda \mathbf x 
\end{equation}
It follows that the mapping $ \mathbf x \mapsto f^{ \mathbf A} ( \mathbf x) $ 
pairs each $ \mathbf x \in \C^n $ 
with the extra point $*$ in case $ \mathbf x$ belongs to the residual set $ R^{ \mathbf A} $,
but with the eigenvalue $ \lambda $ in case $ \mathbf x \in \C^n \setminus R^{ \mathbf A} $
is an eigenvector in the eigenspace $ E_\lambda $.
In particular, the following result holds.

\begin{lemma} \label{lem:range}
	The range $ f^{ \mathbf A} ( \C^n )$ 
		of the map defined by \eqref{eq:defEigenPairs} equals 
		the finite set $ s^{ \mathbf A} \cup \{ * \} \subset \R \cup \{ * \}$.
\end{lemma}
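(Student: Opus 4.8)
The plan is to establish the set equality $f^{\mathbf A}(\C^n) = s^{\mathbf A} \cup \{*\}$ by proving the two inclusions separately, relying almost entirely on the characterization \eqref{eq:keyEq} together with the fact (noted right after Definition \ref{defn:eigenPairing}) that the eigenspaces $E_\lambda$ are pairwise disjoint.

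First I would prove $f^{\mathbf A}(\C^n) \subseteq s^{\mathbf A} \cup \{*\}$. Fix any $\mathbf x \in \C^n$. If $\mathbf x \in R^{\mathbf A}$, then by \eqref{eq:defEigenPairs} we have $f^{\mathbf A}(\mathbf x) = * \in s^{\mathbf A} \cup \{*\}$. Otherwise $\mathbf x \in \C^n \setminus R^{\mathbf A}$, so by the definition of the residual set $\mathbf x$ lies in $E_\mu$ for some $\mu \in s^{\mathbf A}$. Since Proposition \ref{orthogSpaces} makes the eigenspaces mutually orthogonal and hence disjoint, $\mathbf x$ belongs to exactly one of them, so every indicator $1_{E_\lambda}(\mathbf x)$ vanishes except the one for $\lambda = \mu$; the sum in \eqref{eq:defEigenPairs} therefore collapses to the single term $f^{\mathbf A}(\mathbf x) = \mu \in s^{\mathbf A}$.

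For the reverse inclusion $s^{\mathbf A} \cup \{*\} \subseteq f^{\mathbf A}(\C^n)$ I would treat the value $*$ and the eigenvalues separately. The zero vector $\mathbf 0$ is not an eigenvector of $\mathbf A$, so $\mathbf 0 \notin E_\lambda$ for every $\lambda \in s^{\mathbf A}$, whence $\mathbf 0 \in R^{\mathbf A}$; consequently $f^{\mathbf A}(\mathbf 0) = *$, which shows $* \in f^{\mathbf A}(\C^n)$. Next, fix any $\lambda \in s^{\mathbf A}$. By the definition of the spectrum, $\lambda$ is an eigenvalue, so there exists a nonzero eigenvector $\mathbf x$ with $\mathbf{Ax} = \lambda \mathbf x$; by \eqref{eq:keyEq} this gives $f^{\mathbf A}(\mathbf x) = \lambda$, so $\lambda \in f^{\mathbf A}(\C^n)$. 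Combining both inclusions yields the claimed equality, and the finiteness assertion is immediate since $s^{\mathbf A}$ is a finite set by definition.

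The argument is essentially a matter of unwinding the definitions, so I do not anticipate a genuine obstacle. The only point requiring a moment's care is surjectivity onto $*$: it relies on the residual set $R^{\mathbf A}$ being nonempty, which is guaranteed precisely because eigenvectors are by convention nonzero, so that $\mathbf 0$ is always available as a witness for the value $*$.
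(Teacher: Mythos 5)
Your proof is correct and takes essentially the same route as the paper, which offers no separate proof but presents the lemma as an immediate consequence of \eqref{eq:keyEq} together with the disjointness of the eigenspaces $E_\lambda$ established via Proposition \ref{orthogSpaces}. Your explicit two-inclusion argument, including the observation that $\mathbf 0 \in R^{\mathbf A}$ always serves as a witness for the value $*$ (a fact the paper flags only parenthetically in Definition \ref{defn:eigenPairing}), simply spells out what the paper treats as evident.
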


\begin{remark}
	An alternative definition of the function $ \mathbf x \mapsto f^{ \mathbf A} ( \mathbf x) $
	would exclude from its domain the residual set $ R^{ \mathbf A} $ specified 
	in part 2 of Definition \ref{defn:eigenPairing}.
	It seems more in the spirit of probability theory, however, 
	to define $ \C^n \owns \mathbf x \mapsto f^{ \mathbf A} ( \mathbf x) $ on all of $ \C^n $, 
	but then to attach probability zero to the residual set.
\end{remark}

\subsection{Making the Eigen-Pairing Measurable}

We begin with a preliminary lemma.

\begin{lemma}
	Consider the smallest $ \sigma $-algebra $ \sigma ( \mathcal B \cup \{*\} )$ 
	on the co-domain $ \R \cup \{*\} $ that includes the singleton set $ \{*\} $
	in addition to all sets in the Borel $ \sigma $-algebra $ \mathcal B$ on~$ \R $.
	Then $ \sigma ( \mathcal B \cup \{*\} )$ is the union
\begin{equation} \label{eq:defSigma}
	\Sigma := \mathcal B \cup \mathcal B^*
\end{equation}
	of the Borel $ \sigma $-algebra $ \mathcal B$ on $\R$ 
	with the family $ \mathcal B^* := \{ B \cup \{*\} \mid B \in \mathcal B \} $. 
\end{lemma}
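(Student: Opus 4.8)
The plan is to establish the stated equality by proving the two inclusions $ \sigma( \mathcal B \cup \{*\}) \subseteq \Sigma $ and $ \Sigma \subseteq \sigma( \mathcal B \cup \{*\}) $ for the candidate family $ \Sigma := \mathcal B \cup \mathcal B^* $. The first inclusion I would obtain by showing that $ \Sigma $ is itself a $ \sigma $-algebra on $ \R \cup \{*\} $ that contains both the Borel sets $ \mathcal B $ and the singleton $ \{*\} $; since $ \sigma( \mathcal B \cup \{*\}) $ is by definition the smallest $ \sigma $-algebra with these two properties, it must then be a subset of $ \Sigma $. The containment of the generators is immediate: $ \mathcal B \subseteq \Sigma $ holds trivially, and $ \{*\} = \emptyset \cup \{*\} \in \mathcal B^* \subseteq \Sigma $ because the empty set $ \emptyset $ is Borel.

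To verify that $ \Sigma $ is a $ \sigma $-algebra I would check the three defining closure properties. First, the whole space $ \R \cup \{*\} $ lies in $ \Sigma $, since $ \R \in \mathcal B $ forces $ \R \cup \{*\} \in \mathcal B^* $. Second, $ \Sigma $ is closed under complementation relative to $ \R \cup \{*\} $: the complement of a set $ B \in \mathcal B $ is $ ( \R \setminus B) \cup \{*\} \in \mathcal B^* $, while the complement of a set $ B \cup \{*\} \in \mathcal B^* $ is $ \R \setminus B \in \mathcal B $, using in each case only that $ \mathcal B $ is closed under complementation in $ \R $. Third, and this is where the real work lies, $ \Sigma $ is closed under countable unions. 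Given a countable subfamily of $ \Sigma $, I would split it into the sets drawn from $ \mathcal B $ and those drawn from $ \mathcal B^* $. If none of the sets contains $ * $, then the union is a countable union of Borel sets and hence lies in $ \mathcal B $. If at least one set contains $ * $, then the union contains $ * $, and deleting $ * $ leaves a countable union of Borel sets, so the union has the form $ B' \cup \{*\} $ for some $ B' \in \mathcal B $ and therefore lies in $ \mathcal B^* $. Either way the union stays in $ \Sigma $.

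For the reverse inclusion $ \Sigma \subseteq \sigma( \mathcal B \cup \{*\}) $, I would argue separately on the two families making up $ \Sigma $. Every $ B \in \mathcal B $ already belongs to $ \sigma( \mathcal B \cup \{*\}) $ because $ \mathcal B $ is part of the generating collection, and every $ B \cup \{*\} \in \mathcal B^* $ is the union of the two generators $ B \in \mathcal B $ and $ \{*\} $, hence also belongs to the generated $ \sigma $-algebra. Combining the two inclusions yields the desired equality $ \sigma( \mathcal B \cup \{*\}) = \Sigma $.

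The only genuinely delicate step is the verification of closure under countable unions, and specifically the bookkeeping in the mixed case where some but not all of the sets carry the extra point $ * $; every other part of the argument reduces directly to the corresponding closure properties of the Borel $ \sigma $-algebra $ \mathcal B $ on $ \R $.
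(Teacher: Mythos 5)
Your proposal is correct and follows essentially the same route as the paper's own proof: the easy inclusion $\Sigma \subseteq \sigma(\mathcal B \cup \{*\})$ because every set in $\mathcal B \cup \mathcal B^*$ is built from the generators, plus the verification that $\Sigma$ is itself a $\sigma$-algebra (whole space, the two complement computations, and the countable-union case split according to whether any set in the family contains $*$), which by minimality gives the reverse inclusion. Your write-up is in fact slightly more explicit than the paper's, e.g.\ in noting that $\{*\} = \emptyset \cup \{*\} \in \mathcal B^*$, but the argument is the same.
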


\begin{proof}
	First, any $ \sigma $-algebra that includes all the Borel sets $B \in \mathcal B$ 
	as well as the singleton set $ \{*\} $ must obviously include every set 
	in the family $ \Sigma $ defined by \eqref{eq:defSigma}.
	It remains only to prove that $ \Sigma $ is itself a $ \sigma $-algebra.
	
	Evidently the family $ \Sigma $ includes the whole co-domain $ \R \cup \{ * \} $ 
	as well as, for each Borel set $B \in \mathcal B$, the respective complements 
	of the two sets $B$ and $B \cup \{*\} $ in $ \R \cup \{ * \} $, which are
\begin{equation}
\begin{array} {rcl}
	( \R \cup \{ * \} ) \setminus B &=& ( \R \setminus B) \cup \{ * \} \\
	\text{and} \quad ( \R \cup \{ * \} ) \setminus (B \cup \{*\} ) &=& \R \setminus B
\end{array}
\end{equation}
	Because $( \R \setminus B) \cup \{ * \} \in \mathcal B^* $
	and $( \R \cup \{ * \} ) \setminus (B \cup \{*\} ) \in \mathcal B$,
	it follows that the complement of any set in $ \Sigma $ also belongs to $ \Sigma $.
	
	Finally, consider the union of any countable family $ \mathcal F$ of sets in $ \Sigma $.
	This union is: either (i), in case no set in $ \mathcal F$ has $*$ as a member, 
	a union $ \cup _{k \in K} B_k $ of a countable family $ \{ B_k \}_{k \in K} $
	of sets in $ \mathcal B$;
	or (ii), in case at least one set in $ \mathcal F$ has $*$ as a member,
	a union $ \left ( \cup _{k \in K} B_k \right) \cup \{*\} $.
	In either case, the countable union $ \cup _{F \in \mathcal F} F$ 
	belongs either to $ \mathcal B$ or to $ \mathcal B^* $, and so to $ \Sigma $.
	
	This completes the confirmation that $ \Sigma $ defined by \eqref{eq:defSigma}
	is a $ \sigma $-algebra on $ \R \cup \{ * \} $.
\end{proof}

\begin{theorem}
	Let $ \Sigma ^{ \mathbf A} $ denote the ortho-algebra on $ \C^n $ 
	which results from applying part 3 of Definition \ref{def:orthoAlgebra}
	to the orthogonal decomposition $ \bigoplus _{ \lambda \in s^{ \mathbf A} } L_\lambda $.
	Then the eigen-pairing 
 \( \C^n \owns \mathbf x \mapsto f^{ \mathbf A} ( \mathbf x) \in \R \cup \{*\} \)
	defined by \eqref{eq:defEigenPairs} yields a measurable function
	from the measurable space $( \C^n, \Sigma ^{ \mathbf A} )$ to the measurable space 
 \( ( \R \cup \{*\}, \sigma ( \mathcal B \cup \{*\} ) = ( \R \cup \{*\}, \Sigma ) \).
\end{theorem}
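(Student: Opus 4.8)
The plan is to exploit the fact that, by Lemma \ref{lem:range}, the eigen-pairing $ f^{ \mathbf A} $ takes only finitely many values, namely those in the finite set $ s^{ \mathbf A} \cup \{ * \} $. This turns $ f^{ \mathbf A} $ into a simple (finitely-valued) function, for which measurability reduces to a statement about preimages of singletons: a simple function is measurable precisely when the preimage of each value in its range lies in the domain $ \sigma $-algebra. So the whole argument will rest on computing $ ( f^{ \mathbf A} )^{-1} ( \{ v \} ) $ for each $ v \in s^{ \mathbf A} \cup \{ * \} $ and checking that it belongs to $ \Sigma ^{ \mathbf A} $.

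First I would record that each singleton in the range is itself measurable in the co-domain: every $ \{ \lambda \} $ with $ \lambda \in s^{ \mathbf A} \subset \R $ is a Borel set and so lies in $ \mathcal B \subseteq \Sigma $, while $ \{ * \} = \emptyset \cup \{ * \} $ lies in $ \mathcal B^* \subseteq \Sigma $ since $ \emptyset \in \mathcal B $. The central step is then the preimage computation. By the key equivalence \eqref{eq:keyEq}, for each $ \lambda \in s^{ \mathbf A} $ one has $ ( f^{ \mathbf A} )^{-1} ( \{ \lambda \} ) = E_\lambda = \hat L_\lambda $; and directly from the defining formula \eqref{eq:defEigenPairs} one has $ ( f^{ \mathbf A} )^{-1} ( \{ * \} ) = R^{ \mathbf A} $. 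The crucial observation is that these sets are exactly the cells of the ortho-partition $ \mathfrak P^{ \mathbf A} $ obtained by applying Definition \ref{def:orthoAlgebra} to the spectral orthogonal decomposition $ \bigoplus _{ \lambda \in s^{ \mathbf A} } L_\lambda $, so each of them lies in the ortho-algebra $ \Sigma ^{ \mathbf A} = \sigma ( \mathfrak P^{ \mathbf A} )$ by construction.

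Finally, for an arbitrary measurable set $ S \in \Sigma $ I would use finiteness of the range to write $ ( f^{ \mathbf A} )^{-1} ( S ) = ( f^{ \mathbf A} )^{-1} ( S \cap ( s^{ \mathbf A} \cup \{ * \} ) )$, which equals the finite union $ \bigcup _{ v \in S \cap ( s^{ \mathbf A} \cup \{ * \} )} ( f^{ \mathbf A} )^{-1} ( \{ v \} )$ of cells of $ \mathfrak P^{ \mathbf A} $, and therefore belongs to $ \Sigma ^{ \mathbf A} $. Since $ S $ was arbitrary, this establishes that $ f^{ \mathbf A} $ is measurable from $ ( \C^n, \Sigma ^{ \mathbf A} )$ to $ ( \R \cup \{ * \}, \Sigma )$.

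I do not expect a genuine obstacle here, as the result is essentially bookkeeping once the finite-range reduction is in place. The only point that requires care is the handling of the adjoined element $ * $: one must confirm that $ \{ * \} $ is measurable in the co-domain and, more importantly, that its preimage $ R^{ \mathbf A} $ is genuinely a cell of the ortho-partition rather than some set that must be assembled from cells. This is secured by matching the residual set of part 2 of Definition \ref{defn:eigenPairing} with the residual set of part 1 of Definition \ref{def:orthoAlgebra}, an identification that is valid precisely because $ \hat L_\lambda = E_\lambda $ for every $ \lambda \in s^{ \mathbf A} $.
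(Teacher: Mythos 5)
Your proposal is correct and takes essentially the same route as the paper: both arguments rest on \eqref{eq:keyEq} to identify preimages as finite unions of ortho-partition cells (eigenspace sets $E_\lambda$, plus possibly the residual set $R^{\mathbf A}$), which lie in $\Sigma^{\mathbf A}$ by construction and by closure under finite unions. The only difference is organizational — you package the computation as singleton preimages of a finitely-valued (simple) function, while the paper works directly with the two forms $B$ and $B \cup \{*\}$ of sets in $\Sigma$ — but the substance is identical.
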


\begin{proof}
	Given any Borel set $B \subset \R$, it follows from \eqref{eq:keyEq} that
 \( ( f^{ \mathbf A} )^{-1} (B) = \cup _{ \lambda \in B \cap s^{ \mathbf A} } E_\lambda \)
 	and also that $( f^{ \mathbf A} )^{-1} (B \cup \{ * \}) $ 
	is the union $( f^{ \mathbf A} )^{-1} (B) \cup R^{ \mathbf A} $
	of $( f^{ \mathbf A} )^{-1} (B) $ with the residual set $ R^{ \mathbf A} $.
	Then both the sets $( f^{ \mathbf A} )^{-1} (B) $ 
	and $( f^{ \mathbf A} )^{-1} (B) \cup R^{ \mathbf A} $
	are obviously $ \Sigma ^{ \mathbf A} $-measurable, 
	as unions of finitely many $ \Sigma ^{ \mathbf A} $-measurable subsets of $ \C^n $.
\end{proof}

\subsection{Hermitian Matrices as Ortho-Measurable Functions} \label{ss:orthMeasFns}

This section is devoted to the first main theorem of the paper.
It involves functions meeting the following definition:

\begin{definition} \label{def:orthMeasFn}
	A function \( \C^n \owns \mathbf x \mapsto f( \mathbf x) \in \R \cup \{*\} \) 
	is \emph{ortho-measurable} just in case there exists 
	an orthogonal decomposition $ \bigoplus _{d \in D} L_d $ of $ \C^n $ for which:
	\begin{enumerate}
		\item for all $ \lambda \in \R \cap f( \C^n )$, there exists
		a unique $d \in D$ and so a unique space $ L_d $ of the decomposition such that
		 $ f^{-1} ( \{ \lambda \} ) = L_d \setminus \{ \mathbf 0 \}$;
		\item $ f^{-1} ( \{*\} )$ is the residual set 
 \( R = \C^n \setminus \cup _{d \in D} ( L_d \setminus \{ \mathbf 0 \} ) \).
	\end{enumerate}
\end{definition}

With this additional definition, the theorem can be stated as follows:

\begin{theorem} \label{thm:threeBijections}
	There exist bijections between each pair of the three sets of:
\begin{enumerate}
	\item quantum observables in the form of $n \times n$ Hermitian matrices $ \mathbf A$;
	 \item numerically identified 
	 orthogonal decompositions $ \bigoplus _{ \lambda \in \Lambda } L_\lambda $ of $ \C^n $; 
	\item ortho-measurable functions
	 \( \C^n \owns \mathbf x \mapsto f( \mathbf x) \in \R \cup \{*\} \)
	 satisfying $ f^{-1} ( \{ \lambda \} ) = E_\lambda $ for all $ \lambda \in \Lambda $.
\end{enumerate}
\end{theorem}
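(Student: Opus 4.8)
The plan is to establish all three bijections by exhibiting explicit maps between each pair of sets and checking that they compose to the identity; since bijectivity is preserved under composition, it suffices to construct mutually inverse maps for the two pairs (1)$\leftrightarrow$(2) and (2)$\leftrightarrow$(3), and then obtain (1)$\leftrightarrow$(3) by composing them. For the pair (1)$\leftrightarrow$(2), the forward map $\Phi$ that sends $\mathbf A \mapsto \bigoplus_{\lambda \in s^{\mathbf A}} L_\lambda$ is supplied directly by the spectral theorem of Proposition \ref{prop:spectralThm}, with the numerical identifiers being the eigenvalues. The reverse map $\Psi$ sends a numerically identified decomposition $\bigoplus_{\lambda \in \Lambda} L_\lambda$ to the matrix $\mathbf A := \sum_{\lambda \in \Lambda} \lambda \, \boldsymbol \Pi_{L_\lambda}$. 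The first thing I would check is that $\Psi$ is well-defined, i.e.\ that this $\mathbf A$ is Hermitian: each projection $\boldsymbol \Pi_{L_\lambda}$ is Hermitian by part 3 of Proposition \ref{prop:projMatrices}, and each coefficient $\lambda \in \Lambda$ is real, so the real linear combination $\sum_\lambda \lambda \, \boldsymbol \Pi_{L_\lambda}$ is Hermitian.

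That $\Phi$ and $\Psi$ are mutually inverse splits into two verifications. The identity $\Psi \circ \Phi = \mathrm{id}$ on Hermitian matrices is immediate from the spectral decomposition \eqref{eq:spectralDecomp}, which already expresses $\mathbf A = \sum_{\lambda \in s^{\mathbf A}} \lambda \, \boldsymbol \Pi_{L_\lambda}$. The reverse composition $\Phi \circ \Psi = \mathrm{id}$ on decompositions is the substantive step: starting from $\bigoplus_{\lambda \in \Lambda} L_\lambda$ and its associated matrix $\mathbf A$, I must show that the spectrum of $\mathbf A$ is \emph{exactly} $\Lambda$ and that the eigenspace of each $\mu \in \Lambda$ is \emph{exactly} $L_\mu$. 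One inclusion is easy: for $\mathbf x \in L_\mu \setminus \{ \mathbf 0 \}$, mutual orthogonality forces $\boldsymbol \Pi_{L_\lambda} \mathbf x = \mathbf 0$ whenever $\lambda \ne \mu$ while $\boldsymbol \Pi_{L_\mu} \mathbf x = \mathbf x$, so $\mathbf A \mathbf x = \mu \mathbf x$ and hence $L_\mu \setminus \{ \mathbf 0 \} \subseteq E_\mu$ with $\mu$ an eigenvalue.

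The converse inclusion --- that $\mathbf A$ has no eigenvalues outside $\Lambda$ and no eigenvectors for $\mu$ lying outside $L_\mu$ --- is where the main work lies, and I expect it to be the principal obstacle. The argument I would use exploits that $\bigoplus_{\lambda \in \Lambda} L_\lambda = \C^n$: any $\mathbf x$ decomposes uniquely as $\mathbf x = \sum_\lambda \mathbf x_\lambda$ with $\mathbf x_\lambda \in L_\lambda$, whence $\mathbf A \mathbf x = \sum_\lambda \lambda \, \mathbf x_\lambda$. If $\mathbf A \mathbf x = c \, \mathbf x$, then subtracting gives $\sum_\lambda (\lambda - c) \, \mathbf x_\lambda = \mathbf 0$, and the linear independence of any selection of non-zero vectors drawn one from each $L_\lambda$ --- established in the proposition on orthogonal decompositions that precedes Definition \ref{def:orthoAlgebra} --- forces $(\lambda - c) \, \mathbf x_\lambda = \mathbf 0$ for every $\lambda$. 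Hence $\mathbf x_\lambda = \mathbf 0$ except when $\lambda = c$, so $c \in \Lambda$ and $\mathbf x \in L_c$. This pins down $s^{\mathbf A} = \Lambda$ and $E_\mu = L_\mu \setminus \{ \mathbf 0 \}$ exactly, completing $\Phi \circ \Psi = \mathrm{id}$ and hence the bijection (1)$\leftrightarrow$(2).

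For the pair (2)$\leftrightarrow$(3) the work is largely bookkeeping against the definitions. The forward map sends $\bigoplus_{\lambda \in \Lambda} L_\lambda$ to its eigen-pairing $f^{\mathbf A}$ of Definition \ref{defn:eigenPairing}; the characterization \eqref{eq:keyEq} shows that $(f^{\mathbf A})^{-1}(\{\lambda\}) = L_\lambda \setminus \{ \mathbf 0 \}$ and that $(f^{\mathbf A})^{-1}(\{*\})$ is the residual set, so $f^{\mathbf A}$ is ortho-measurable in the sense of Definition \ref{def:orthMeasFn} and belongs to set (3). The reverse map recovers the decomposition from an ortho-measurable $f$ by setting $L_\lambda := f^{-1}(\{\lambda\}) \cup \{ \mathbf 0 \}$ for each $\lambda \in \R \cap f(\C^n)$, taking these numerical identifiers as the index set; parts 1 and 2 of Definition \ref{def:orthMeasFn} guarantee both that this yields a genuine numerically identified orthogonal decomposition and that the recovered subspaces are uniquely determined. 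The two maps invert each other by construction, since each simply reads the subspaces off as the level sets of the function. Composing the (1)$\leftrightarrow$(2) and (2)$\leftrightarrow$(3) bijections then delivers the remaining bijection (1)$\leftrightarrow$(3), under which an ortho-measurable function corresponds to precisely the Hermitian matrix whose eigen-pairing it is, matching the labelling $f^{-1}(\{\lambda\}) = E_\lambda$ required in set (3).
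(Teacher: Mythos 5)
Your proposal is correct and follows essentially the same route as the paper: the spectral theorem for passing from a Hermitian matrix to its numerically identified decomposition, the weighted projection sum $\sum_{\lambda \in \Lambda} \lambda \, \boldsymbol \Pi_{L_\lambda}$ for the reverse direction, and the level-set/eigen-pairing correspondence for the pair (2)$\leftrightarrow$(3). You actually go further than the paper, whose proof simply displays the chain of correspondences in \eqref{eq:bijections} and declares the result ``evident'': your verification that $\Phi \circ \Psi = \mathrm{id}$ --- that the spectrum of $\sum_{\lambda \in \Lambda} \lambda \, \boldsymbol \Pi_{L_\lambda}$ is exactly $\Lambda$ with eigenspaces exactly the $L_\lambda$, via the unique decomposition $\mathbf x = \sum_{\lambda} \mathbf x_\lambda$ and linear independence --- is precisely the substantive step the paper leaves unargued.
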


\begin{proof}
	First, given any $n \times n$ Hermitian matrix $ \mathbf A$, 
	the spectral theorem \ref{prop:spectralThm} implies the existence 
	of the unique numerically identified orthogonal decomposition 
	 \( \bigoplus\nolimits _{ \lambda \in \Lambda } L_\lambda \) of $ \C^n $,
	 where $ \Lambda $ is the finite spectrum $ s^{ \mathbf A}$ of $ \mathbf A$. 

	Second, let \( \C^n \owns \mathbf x \mapsto f( \mathbf x) \in \R \cup \{*\} \) 
	be any ortho-measurable function.
	Now relabel the sets $ L_d $ 
	in the orthogonal decomposition $ \bigoplus _{d \in D} L_d $ of $ \C^n $ 
	of Definition \ref{def:orthMeasFn} according to the function values $ \lambda $
	in the real part $ \Lambda = \R \cap f( \C^n )$ of the range of $f$, which is a finite set.
	For different values of~$ \lambda $, the corresponding spaces $ L_\lambda $ are orthogonal.
	So the resulting 
	labelled orthogonal decomposition $ \bigoplus _{ \lambda \in \Lambda } L_\lambda $
	is numerically identified.
	Then the decomposition $ \bigoplus _{ \lambda \in \Lambda } L_\lambda $ serves 
	as a parameter of the function $f$ defined by 
\begin{equation} \label{eq:defOMeasFn}
	f \left( \bigoplus\nolimits _{ \lambda \in \Lambda } L_\lambda ; \mathbf x \right) 
	:= \begin{cases} 
	\lambda & \text{if $ \mathbf x \in E_\lambda $}; \\
 	* & \text{if $ \mathbf x \in \C^n \setminus \cup _{ \lambda \in \Lambda } E_\lambda $}.
\end{cases}
\end{equation}
	
	Next, consider the intersection of the graph of this function, 
	which is a subset of the Cartesian product $ \C^n \times ( \R \cup \{*\} )$, 
	with the product set $( \C^n \setminus \{ \mathbf 0\} ) \times \R$. 
	This intersection is the \emph{restricted graph} made up of the finite union 
	of pairwise disjoint sets given by
\begin{equation} \label{eq:graphOMeasFn}
	\Gamma \left( \bigoplus\nolimits _{ \lambda \in \Lambda } L_\lambda \right) 
	:= \bigcup\nolimits _{ \lambda \in \Lambda } \left( E_\lambda \times \{ \lambda \} \right)
\end{equation}
	
	Finally, recalling the spectral decomposition \eqref{eq:spectralDecomp} 
	and putting $ \Lambda = s^{ \mathbf A} $, we have the following chain of bijections
\begin{equation} \label{eq:bijections}
	\mathbf A = \sum\nolimits _{ \lambda \in s^{ \mathbf A} } 
	\lambda \, \boldsymbol \Pi _{ L_\lambda } 
	\longleftrightarrow \bigoplus\nolimits _{ \lambda \in \Lambda } L_\lambda 
	\longleftrightarrow \Gamma \left( \bigoplus\nolimits _{ \lambda \in \Lambda } L_\lambda \right)
\end{equation}
	This chain makes the result evident.
\end{proof}

\subsection{A Contextual Multi-Measurable Space} \label{ss:contextualMultMeas}

Let $ \mathcal D$ denote the family of all orthogonal decompositions of $ \C^n $.
Then, for each orthogonal decomposition $D \in \mathcal D$
and associated ortho-algebra $ \Sigma ^D $,
the pair $( \C^n, \Sigma ^D )$ is a measurable space
that depends on the orthogonal decomposition $D$,
regarded as a \emph{context}.
So, according to the definitions in Hammond (2025), 
the pair $( \C^n, ( \Sigma ^D )_{D \in \mathcal D}) $
with the complete family of all ortho-algebras is a \emph{multi-measurable space}.

\section{Wave Vectors as Pre-Probabilities} \label{s:waveVectors}
\subsection{Preliminary Definitions}

In quantum theory with the quantum state space $ \C^n $,
according to usual terminology, a ``normalized wave function'' is a mapping 
 \( T \owns t \mapsto \psi (t) \in \Sph \) from a time interval $T \subset \R$ 
to the unit sphere $ \Sph $. 
Since the current focus is on events at one fixed time, we use the following definition.

\begin{definition}
	A \emph{wave vector} is an element of the Hilbert space $ \C^n \setminus \{ \mathbf 0\}$. 
	The wave vector $ \boldsymbol \psi \in \C^n \setminus \{ \mathbf 0\} $ is \emph{normalized}
	just in case it is an element of the unit sphere $ \Sph $ in $ \C^n $.
\end{definition}

\begin{definition}
	Let $ \mathcal B = ( \mathbf b^j )_{j = 1}^n $ be any orthonormal basis of $ \C^n $.
\begin{enumerate}	
	\item Given any $j \in \N_n $ and any corresponding $ \mathbf b^j \in \mathcal B$, let:
\begin{itemize}
	\item $ L_j := [ \mathbf b^j ]$ denote the one-dimensional linear space
	spanned by the basis vector $ \mathbf b^j $;
	\item $ \boldsymbol \Pi _{ L_j }$ denote 
	the $n \times n$ Hermitian matrix $ \mathbf b^j ( \mathbf b^j )^* $ which, 
	by Proposition \ref{prop:oneDimProj}, 
	represents the orthogonal projection mapping onto $ L_j $.   
\end{itemize}
	\item Let $ D^{ \mathcal B} := \bigoplus _{j = 1}^n L_j $
	denote the associated \emph{basic} orthogonal decomposition of $ \C^n $.
	\item Let $ \mathfrak P^{ \mathcal B} $ denote the associated \emph{basic} ortho-partition
	whose cells are, as in Definition \ref{def:orthoAlgebra}, 
	the $n$ sets $ L_j \setminus \{ \mathbf 0 \}$, together with the residual set 
 \( R^{ \mathcal B} = \C^n \setminus \cup _{j = 1}^n ( L_j \setminus \{ \mathbf 0 \}) \).
	\item Let $ \Sigma ^{ \mathcal B} = \sigma ( \mathfrak P^{ \mathcal B} )$ 
	denote the associated \emph{basic} ortho-algebra generated by the cells 
	which constitute the basic ortho-partition $ \mathfrak P^{ \mathcal B} $.
\end{enumerate}
\end{definition}

\subsection{Wave Vectors as Pre-Probabilities: Special Case}

Consider the special case of a basic ortho-algebra $ \Sigma ^{ \mathcal B} $
associated with an orthonormal basis $ \mathcal B = ( \mathbf b^j )_{j = 1}^n $ of $ \C^n $.
Any normalized wave vector $ \boldsymbol \psi \in \Sph $ can be used 
to construct a probability measure $ \mathbb P ^{ \mathcal B}_{ \boldsymbol \psi }$ 
on the measurable space $( \C^n, \Sigma ^{ \mathcal B} )$,
according to \emph{Born's rule}, treating $ \boldsymbol \psi $ as a parameter.
This rule requires that, for each $j \in \N_n $ and $ L_j = [ \mathbf b^j ]$,
the probability 
 \( \mathbb P^{ \mathcal B}_{ \boldsymbol \psi } ( L_j \setminus \{ \mathbf 0 \} ) \)
of the basic set $ L_j \setminus \{ \mathbf 0 \} \in \Sigma ^{ \mathcal B} $
is given by \( \boldsymbol \psi ^* \boldsymbol \Pi _{ L_j } \boldsymbol \psi \).%
\footnote{Note that when the vector $ \boldsymbol \psi $ is not normalized,
this formula is replaced by the \emph{Rayleigh quotient} 
 \( \boldsymbol \psi ^* \boldsymbol \Pi _{ L_j } \boldsymbol \psi 
 / | \boldsymbol \psi |^2 \).}
But $ \boldsymbol \Pi _{ L_j } = \mathbf b^j ( \mathbf b^j )^* $ in this special case, so
\begin{equation} \label{eq:BornRule}
	\mathbb P^{ \mathcal B}_{ \boldsymbol \psi } ( L_j \setminus \{ \mathbf 0 \} )
	= \boldsymbol \psi ^* \mathbf b^j ( \mathbf b^j )^* \boldsymbol \psi
	= | \boldsymbol \psi ^* \mathbf b^j |^2
\end{equation}

In the even more special case of the \emph{canonical orthonormal basis} $ \mathcal B$,
which consists of the $n$ columns of the $n \times n$ identity matrix, 
one has $ \boldsymbol \psi ^* \mathbf b^j = \psi _j $.
Then Born's rule formula \eqref{eq:BornRule} evidently reduces to 
 \( \mathbb P ^{ \mathcal B}_{ \boldsymbol \psi } ( L_j \setminus \{ \mathbf 0 \} ) 
 = | \psi _j |^2 \),
which is the \emph{squared modulus rule} for the probability
of each basic set $ L_j \setminus \{ \mathbf 0 \}$ 
in the orthogonal decomposition $ \bigoplus _{j = 1}^n ( L_j \setminus \{ \mathbf 0 \} )$.
Because normalization implies that the components of the wave vector
satisfy $ \sum _{j = 1}^n | \psi _j |^2 = 1$, these probabilities do sum to one. 

\subsection{Wave Vectors as Pre-Probabilities: General Case}

Let $ \bigoplus _{d \in D} L_d $ be any orthogonal decomposition of $ \C^n $,
with associated ortho-partition $ \mathfrak P^D $ and ortho-algebra $ \Sigma ^D $.
Born's rule requires the probability 
of each non-residual cell $ L_d \setminus \{ \mathbf 0 \} $ 
in the ortho-partition $ \mathfrak P^D $ to satisfy
\begin{equation} \label{eq:Born}
	 \mathbb P^D_{ \boldsymbol \psi } ( L_d \setminus \{ \mathbf 0 \} )
	 = \boldsymbol \psi ^* \boldsymbol \Pi _{ L_d } \boldsymbol \psi
\end{equation}
where, unlike in \eqref{eq:BornRule}, 
the orthogonal projection represented by the matrix $ \boldsymbol \Pi _{ L_d }$
may be onto a space $ L_d $ whose dimension exceeds one.

Note that for an orthogonal decomposition,
the sum $ \sum _{d \in D} \boldsymbol \Pi _{ L_d }$ of all the orthogonal projections 
onto the component subspaces $ L_d $ equals the identity matrix $ \mathbf I$.
It follows that
\begin{align}
   \sum\nolimits _{d \in D} 
   \mathbb P^D_{ \boldsymbol \psi } ( L_d \setminus \{ \mathbf 0 \} ) 
 &= \sum\nolimits _{d \in D} \boldsymbol \psi ^* \boldsymbol \Pi _{ L_d } \boldsymbol \psi 
 \notag \\
 &= \boldsymbol \psi ^* \left( \sum\nolimits _{d \in D} \boldsymbol \Pi _{ L_d } \right)
 	 \boldsymbol \psi 
 = \boldsymbol \psi ^* \mathbf I \boldsymbol \psi = \boldsymbol \psi ^* \boldsymbol \psi 
 = 1 \label{eq:normProbs}
\end{align}
Also $ \mathbb P^D_{ \boldsymbol \psi } ( R^D ) = 0$ for the residual set
 \( R^D = \C^n \setminus \cup _{d \in D} ( L_d \setminus \{ \mathbf 0\} ) \).

\subsection{The CDF of an Observable for a Given Wave Vector}

Consider any normalized wave vector $ \boldsymbol \psi \in \Sph $, 
along with any quantum observable in the form of a Hermitian matrix $ \mathbf A$
whose spectral decomposition is the eigenvalue-weighted sum 
 \( \mathbf A = \sum _{ \lambda \in s^{ \mathbf A} }
 	 \lambda \, \boldsymbol \Pi _{ L^{ \mathbf A} _\lambda } \) of the family 
 \( \left\{ \boldsymbol \Pi _{ L^{ \mathbf A} _\lambda } 
\mid \lambda \in s^{ \mathbf A} \right\} \) of projection matrices.
Then the pair $( \boldsymbol \psi, \mathbf A) $ induces a random variable 
 \( \C^n \owns \mathbf x \mapsto f^{ \mathbf A} ( \mathbf x) \in \R \cup \{ * \} \)
on the ortho-measurable space $( \C^n, \Sigma ^{ \mathbf A}) $
with a \emph{cumulative distribution function} (or CDF)
\( \R \owns r \mapsto F^{ \mathbf A} _{ \boldsymbol \psi } (r) \in [0, 1] \) which, as usual,
specifies for each $r \in \R$ the probability that the random variable 
satisfies $ f^{ \mathbf A} ( \mathbf x) \le r$.
This CDF takes the form
\begin{equation} \label{eq:CDF}
	F^{ \mathbf A} _{ \boldsymbol \psi } (r) = \mathbb P^{ \mathbf A} _{ \boldsymbol \psi } 
	\left( ( f^{ \mathbf A} )^{-1} ( - \infty, r] \right)
	= \sum\nolimits _{ \lambda \in s^{ \mathbf A} } 1_{ \lambda \le r} ( \lambda ) \,
	\boldsymbol \psi ^* \boldsymbol \Pi _{ L^{ \mathbf A} _\lambda } \boldsymbol \psi
\end{equation}
Because $ \bigoplus _{ \lambda \in s^{ \mathbf A} } L^{ \mathbf A} _\lambda = \C^n $,
it follows from \eqref{eq:CDF} and then \eqref{eq:normProbs} that
\begin{equation} \label{eq:CDFtop}
	F^{ \mathbf A} _{ \boldsymbol \psi } ( + \infty ) 
	= \mathbb P^{ \mathbf A} _{ \boldsymbol \psi } 
	\left( ( f^{ \mathbf A} )^{-1} ( \R ) \right)
	= \sum\nolimits _{ \lambda \in s^{ \mathbf A} }
	\boldsymbol \psi ^* \boldsymbol \Pi _{ L^{ \mathbf A} _\lambda } \boldsymbol \psi
	= 1
\end{equation}
This, of course, implies that the CDF gives probability zero 
to the residual event that $ f^{ \mathbf A} ( \mathbf x) = *$.

Because of the spectral decomposition 
 \( \mathbf A = \sum _{ \lambda \in s^{ \mathbf A} }
 	 \lambda \, \boldsymbol \Pi _{ L^{ \mathbf A} _\lambda } \) of $ \mathbf A$ 
and linearity, the \emph{expectation} of the induced random variable $ f^{ \mathbf A} $ is
\[ \mathbb E_{ \boldsymbol \psi } f^{ \mathbf A} 
 	= \sum\nolimits _{ \lambda \in s^{ \mathbf A} } \lambda
	\boldsymbol \psi ^* \boldsymbol \Pi _{ L^{ \mathbf A} _\lambda } \boldsymbol \psi
	= \boldsymbol \psi ^* \mathbf A \boldsymbol \psi \]

\subsection{A Multi-Probability Space for a Given Wave Vector} \label{ss:multProbSpace}

Given any fixed normalized wave vector $ \boldsymbol \psi \in \Sph $,
we can now use the family $( \mathbb P^D_{ \boldsymbol \psi } )_{D \in \mathcal D} $ 
of probability measures we have just defined in order to extend:
\begin{itemize}
	\item the previous \emph{multi-measurable} space
	 $( \C^n, ( \Sigma ^D )_{D \in \mathcal D}) $
	defined in Section \ref{ss:contextualMultMeas},
    with a complete family of ortho-algebras $ \Sigma ^D $,
    one for each orthogonal decomposition $D \in \mathcal D$;
\item into a \emph{multi-probability} space 
 \( ( \C^n, ( \Sigma ^D, \mathbb P^D_{ \boldsymbol \psi } )_{D \in \mathcal D}) \),
    with a complete family of \emph{contextual probability} spaces
 \( ( \C^n, \Sigma ^D, \mathbb P^D_{ \boldsymbol \psi }) \),
    one for each orthogonal decomposition $D \in \mathcal D$.
\end{itemize}

\section{Density Matrices and the Trace Formula} \label{s:densityMatrices}
\subsection{Key Properties of the Trace of a Matrix}

Recall that the \emph{trace} $ \tr \mathbf A$ of any $n \times n$ matrix $ \mathbf A$ 
is the sum $ \sum _{j = 1}^n a_{jj} $ of the elements on its principal diagonal. 
In case $ \mathbf A$ is Hermitian, these diagonal elements are all real, 
and so therefore is the trace.

\begin{lemma} \label{propTrace}
	Let $ \mathbf A = ( a_{ij} )_{n \times n} $ and $ \mathbf B = ( b_{ji} )_{n \times n} $
	be complex $n \times n$ matrices.
	Then: (i) $ \tr ( \mathbf {AB} ) = \tr ( \mathbf {BA} )$;
	(ii) if $ \mathbf B^{-1} $ exists,
	then $ \tr ( \mathbf B^{-1} \mathbf {AB} ) = \tr \mathbf A$.   
\end{lemma}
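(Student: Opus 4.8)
The plan is to prove both claims directly from the definition of the trace as the sum of diagonal entries, together with the standard formula for matrix multiplication.

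For part (i), I would first write out the $(k,k)$ diagonal entry of the product $\mathbf{AB}$ explicitly as $(\mathbf{AB})_{kk} = \sum_{j=1}^n a_{kj} b_{jk}$, so that summing over $k$ gives
\[
\tr(\mathbf{AB}) = \sum_{k=1}^n \sum_{j=1}^n a_{kj} b_{jk}.
\]
The key observation is that this is a double sum of scalar products over all index pairs $(k,j)$, and scalars in $\C$ commute. Swapping the order of summation and relabelling indices, I would recognize
\[
\sum_{k=1}^n \sum_{j=1}^n a_{kj} b_{jk} = \sum_{j=1}^n \sum_{k=1}^n b_{jk} a_{kj} = \sum_{j=1}^n (\mathbf{BA})_{jj} = \tr(\mathbf{BA}),
\]
which establishes (i). There is no real obstacle here; the entire content is the commutativity of complex multiplication and the legitimacy of interchanging two finite sums.

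For part (ii), I would treat it as a direct consequence of part (i) rather than computing anew. Setting $\mathbf{A}' := \mathbf{B}^{-1}\mathbf{A}$ and applying the cyclic property from (i) to the product $\mathbf{A}'\mathbf{B}$ gives
\[
\tr(\mathbf{B}^{-1}\mathbf{A}\mathbf{B}) = \tr\bigl((\mathbf{B}^{-1}\mathbf{A})\mathbf{B}\bigr) = \tr\bigl(\mathbf{B}(\mathbf{B}^{-1}\mathbf{A})\bigr) = \tr\bigl((\mathbf{B}\mathbf{B}^{-1})\mathbf{A}\bigr) = \tr(\mathbf{I}\mathbf{A}) = \tr\mathbf{A}.
\]
Here I use associativity of matrix multiplication to regroup $\mathbf{B}(\mathbf{B}^{-1}\mathbf{A})$ as $(\mathbf{B}\mathbf{B}^{-1})\mathbf{A}$ and the defining property $\mathbf{B}\mathbf{B}^{-1} = \mathbf{I}$ of the inverse. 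The only point requiring care — though it is routine — is applying the cyclic identity with the correct bracketing, so that the matrix playing the role of ``$\mathbf{A}$'' in part (i) is the product $\mathbf{B}^{-1}\mathbf{A}$ and the role of ``$\mathbf{B}$'' is simply $\mathbf{B}$. Neither part presents a genuine difficulty; the result is a standard lemma whose proof is essentially bookkeeping with indices and an appeal to associativity.
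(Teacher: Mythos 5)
Your proof is correct and follows essentially the same route as the paper: part (i) by writing the trace as a double sum over indices and interchanging the two finite sums, and part (ii) by applying the cyclic identity from (i) to a bracketing of the triple product so that $\mathbf B$ cancels against $\mathbf B^{-1}$. The only cosmetic difference is which bracketing you choose (the paper groups $\mathbf B^{-1}$ with $\mathbf{AB}$, you group $\mathbf B^{-1}\mathbf A$ with $\mathbf B$), which changes nothing of substance.
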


\begin{proof} 
For part (i), one has
\[ \tr ( \mathbf {AB} ) = \sum\nolimits _{i = 1}^n \sum\nolimits _{j = 1}^n a_{ij} b_{ji}
					= \sum\nolimits _{j = 1}^n \sum\nolimits _{i = 1}^n b_{ji} a_{ij} 
					= \tr ( \mathbf {BA} ) \]
For part (ii), put $ \mathbf U = \mathbf B^{-1} $ and $ \mathbf V = \mathbf {AB} $. 
Because part (i) implies that $ \tr ( \mathbf {UV} ) = \tr ( \mathbf {VU} )$,
one has $ \tr ( \mathbf B^{-1} \mathbf {AB} ) = \tr ( \mathbf {ABB}^{-1} ) = \tr \mathbf A$.
\end{proof}

\begin{proposition} \label{prop:trace}
	Suppose that $ \mathbf A$ is an $n \times n$ Hermitian matrix 
	whose spectral decomposition, as in \eqref{eq:spectralDecomp},
	is \( \mathbf A = \sum\nolimits _{ \lambda \in s^{ \mathbf A} } 
			\lambda \, \boldsymbol \Pi _{ L_\lambda } \). 
	For each $ \lambda \in s^{ \mathbf A} $, let $ m_\lambda $ denote 
	the dimension of the linear space $ L_\lambda $.
	Then 
\begin{equation} \label{eq:traceFormula}
	\tr \mathbf A = \sum\nolimits _{ \lambda \in s^{ \mathbf A} } \lambda \, m_\lambda
\end{equation}
\end{proposition}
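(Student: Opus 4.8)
The plan is to exploit the linearity of the trace together with the spectral decomposition \eqref{eq:spectralDecomp}, thereby reducing the whole claim to computing the trace of a single orthogonal projection. Since $\tr$ is linear in its matrix argument, applying it to $\mathbf A = \sum_{\lambda \in s^{\mathbf A}} \lambda \, \boldsymbol \Pi _{L_\lambda}$ immediately yields $\tr \mathbf A = \sum_{\lambda \in s^{\mathbf A}} \lambda \, \tr \boldsymbol \Pi _{L_\lambda}$. Comparing with the target formula \eqref{eq:traceFormula}, everything comes down to establishing that $\tr \boldsymbol \Pi _{L_\lambda} = m_\lambda$ for each $\lambda$; that is, that the trace of an orthogonal projection equals the dimension of the subspace onto which it projects.

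To prove this I would fix $\lambda \in s^{\mathbf A}$ and choose an orthonormal basis $\{ \mathbf b^k \}_{k=1}^{m_\lambda}$ of the eigenspace $L_\lambda$. By part 5 of Proposition \ref{prop:projMatrices} the projection splits as $\boldsymbol \Pi _{L_\lambda} = \sum_{k=1}^{m_\lambda} \boldsymbol \Pi _{[ \mathbf b^k]}$, and by Proposition \ref{prop:oneDimProj} each summand is the rank-one matrix $\boldsymbol \Pi _{[ \mathbf b^k]} = \mathbf b^k ( \mathbf b^k)^*$. Using linearity of the trace once more, it then suffices to show that each such rank-one projection has trace one, after which summing over the $m_\lambda$ basis vectors gives $\tr \boldsymbol \Pi _{L_\lambda} = m_\lambda$.

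The central computation is therefore $\tr \bigl( \mathbf b^k ( \mathbf b^k)^* \bigr) = 1$. Here I would invoke the cyclic property of the trace from part (i) of Lemma \ref{propTrace} to rewrite $\tr \bigl( \mathbf b^k ( \mathbf b^k)^* \bigr) = \tr \bigl( ( \mathbf b^k)^* \mathbf b^k \bigr)$. Since $( \mathbf b^k)^* \mathbf b^k = \| \mathbf b^k \|^2 = 1$ is simply the scalar $1 \times 1$ matrix, its trace equals $1$. Substituting back through the two reductions recovers \eqref{eq:traceFormula}.

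I expect no serious obstacle in this argument; the one point worth flagging is that the cyclic identity is being applied to the rectangular factorization $\mathbf b^k ( \mathbf b^k)^*$, namely an $n \times 1$ vector times a $1 \times n$ covector, whereas Lemma \ref{propTrace}(i) is stated for square matrices. This is harmless: the index-swapping proof of that lemma goes through verbatim for any conformable pair whose two products $\mathbf{AB}$ and $\mathbf{BA}$ are both defined and square, so the rank-one case needs only a one-line remark rather than a separate argument.
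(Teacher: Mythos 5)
Your proof is correct, but it takes a genuinely different route from the paper's. The paper diagonalizes: it writes $ \mathbf A = \mathbf U^{-1} \mathbf D \mathbf U $ with $ \mathbf U$ unitary and $ \mathbf D$ diagonal, notes that each $ \lambda \in s^{ \mathbf A} $ appears exactly $ m_\lambda $ times on the diagonal of $ \mathbf D$, and then invokes the similarity invariance of the trace (part (ii) of Lemma \ref{propTrace}) to conclude $ \tr \mathbf A = \tr \mathbf D = \sum_{ \lambda \in s^{ \mathbf A}} \lambda \, m_\lambda $. You instead apply the trace term-by-term to the spectral decomposition and reduce the claim to $ \tr \boldsymbol \Pi _{ L_\lambda } = m_\lambda $, which you prove by splitting $ \boldsymbol \Pi _{ L_\lambda }$ into rank-one projections $ \mathbf b^k ( \mathbf b^k )^* $ via part 5 of Proposition \ref{prop:projMatrices} and Proposition \ref{prop:oneDimProj}, each of trace one. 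Notably, the identity $ \tr \boldsymbol \Pi _L = \dim L$ that you establish as an intermediate step is exactly the paper's subsequent Proposition \ref{prop:traceProj}, which the paper proves separately (again by diagonalization); your organization therefore delivers both \eqref{eq:traceFormula} and that proposition in a single argument, at the cost of leaning on the unproved part 5 of Proposition \ref{prop:projMatrices}, whereas the paper's route needs only the diagonalization theorem and Lemma \ref{propTrace}(ii). Your caveat about applying the cyclic identity of Lemma \ref{propTrace}(i) to the rectangular product $ \mathbf b^k ( \mathbf b^k )^* $ is well taken and correctly resolved, though you could sidestep it entirely: the $i$-th diagonal entry of $ \mathbf b^k ( \mathbf b^k )^* $ is $ | b^k_i |^2 $, so its trace is $ \| \mathbf b^k \|^2 = 1$ directly from the definition of the trace.
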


\begin{proof} 
	A well-known property of any Hermitian matrix $ \mathbf A$ is that it can be diagonalized, 
	meaning that there exist an $n \times n$ unitary matrix $ \mathbf U$ 
	and an $n \times n$ diagonal matrix $ \mathbf D$ 
	such that $ \mathbf U \mathbf A \mathbf U^{-1} = \mathbf D$,
	and so $ \mathbf U^{-1} \mathbf D \mathbf U = \mathbf A$. 
	Moreover, the diagonal entries of $ \mathbf D$ are the eigenvalues of $ \mathbf A$.
	To allow for repeated eigenvalues, note that for each $ \lambda \in s^{ \mathbf A} $, 
	the dimension $ m_\lambda $ of $ L_\lambda $ equals the number of times that $ \lambda $ 
	appears on the diagonal of $ \mathbf D$.
	So it follows from part (ii) of Lemma \ref{propTrace} that
\[ \tr \mathbf A = \tr ( \mathbf U^{-1} \mathbf D \mathbf U ) 
	= \tr \mathbf D = \sum\nolimits _{ \lambda \in s^{ \mathbf A} } \lambda \, m_\lambda 
	\qedhere \]
\end{proof}

The right-hand side of \eqref{eq:traceFormula} can be described
as the \emph{dimensionally} or \emph{multiplicity weighted sum} of the eigenvalues.

\begin{proposition} \label{prop:traceProj}
	Let $L$ be any linear subspace of $ \C^n $ whose dimension is $ m_L $.
	Then the trace $ \tr \boldsymbol \Pi _L $ of the orthogonal projection $ \Pi _L $
	onto $L$ is $ m_L $.
\end{proposition}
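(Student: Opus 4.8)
The plan is to reduce the trace of the projection onto $L$ to a sum of traces of one-dimensional projections, each of which is easily computed to equal one. First I would fix any orthonormal basis $\{ \mathbf b^k \}_{k = 1}^{m_L}$ of the $m_L$-dimensional subspace $L$, which exists by the standard Gram--Schmidt construction. Then, by part 5 of Proposition \ref{prop:projMatrices}, the projection matrix decomposes as $\boldsymbol \Pi _L = \sum _{k = 1}^{m_L} \boldsymbol \Pi _{[ \mathbf b^k ]}$, where each $[ \mathbf b^k ]$ is the one-dimensional subspace spanned by the respective basis vector.

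Next, I would invoke Proposition \ref{prop:oneDimProj}, which identifies each one-dimensional projection as the Hermitian rank-one matrix $\boldsymbol \Pi _{[ \mathbf b^k ]} = \mathbf b^k ( \mathbf b^k )^*$. The key step is then to compute the trace of this matrix: by the cyclic property of the trace established in part (i) of Lemma \ref{propTrace}, one has $\tr ( \mathbf b^k ( \mathbf b^k )^* ) = \tr ( ( \mathbf b^k )^* \mathbf b^k )$, and since $\{ \mathbf b^k \}$ is orthonormal the scalar $( \mathbf b^k )^* \mathbf b^k = \| \mathbf b^k \|^2 = 1$. Hence each summand contributes a trace of exactly one.

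Finally, using the linearity of the trace, which is immediate from its definition as a sum of diagonal entries, I would conclude $\tr \boldsymbol \Pi _L = \sum _{k = 1}^{m_L} \tr \boldsymbol \Pi _{[ \mathbf b^k ]} = \sum _{k = 1}^{m_L} 1 = m_L$, as required.

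I do not expect any genuine obstacle here, since every ingredient has already been prepared in the preceding sections. An alternative route would bypass the basis entirely by observing that $\boldsymbol \Pi _L$ is Hermitian with $\boldsymbol \Pi _L^2 = \boldsymbol \Pi _L$, so its only eigenvalues are $0$ and $1$, with the eigenvalue $1$ occurring on $L$ (dimension $m_L$) and $0$ on its orthogonal complement (dimension $n - m_L$); the multiplicity-weighted trace formula \eqref{eq:traceFormula} of Proposition \ref{prop:trace} then yields $\tr \boldsymbol \Pi _L = 1 \cdot m_L + 0 \cdot (n - m_L) = m_L$. The only point requiring minor care in either approach is well-definedness: the value $m_L$ must not depend on the choice of orthonormal basis, which is guaranteed because the trace is an intrinsic property of the matrix $\boldsymbol \Pi _L$ itself.
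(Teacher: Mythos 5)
Your primary argument is correct, but it takes a genuinely different route from the paper. The paper diagonalizes: it writes $\mathbf D = \mathbf U \boldsymbol \Pi _L \mathbf U^{-1}$ with $\mathbf U$ unitary, argues directly that any eigenpair $(\lambda, \mathbf x)$ of $\boldsymbol \Pi _L$ has $\lambda = 1$ with $\mathbf x \in L$ or $\lambda = 0$ with $\mathbf x \in L^\perp$, so that $\mathbf D$ has exactly $m_L$ ones and $n - m_L$ zeros on its diagonal, and then invokes the similarity-invariance of the trace (part (ii) of Lemma \ref{propTrace}). Your approach instead decomposes $\boldsymbol \Pi _L = \sum_{k=1}^{m_L} \boldsymbol \Pi _{[\mathbf b^k]} = \sum_{k=1}^{m_L} \mathbf b^k (\mathbf b^k)^*$ using part 5 of Proposition \ref{prop:projMatrices} and Proposition \ref{prop:oneDimProj}, then sums the rank-one traces by linearity. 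Your route is the more elementary one: it needs no spectral theorem or unitary diagonalization, only Gram--Schmidt and facts the paper has already stated about projections. The paper's route has the virtue of exhibiting the eigenstructure of a projection explicitly (eigenvalues $0$ and $1$ with the stated multiplicities), which is thematically closer to the spectral viewpoint the paper is built around; indeed your own ``alternative route'' via Proposition \ref{prop:trace} is essentially the paper's proof repackaged, since Proposition \ref{prop:trace} was itself proved by diagonalization. One small caution on your main argument: Lemma \ref{propTrace}(i) is stated only for square $n \times n$ matrices, whereas you apply it to the $n \times 1$ and $1 \times n$ factors $\mathbf b^k$ and $(\mathbf b^k)^*$. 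The cyclic identity does hold for such rectangular products, but it is cleaner to bypass the lemma entirely: the diagonal entries of $\mathbf b^k (\mathbf b^k)^*$ are $|b^k_i|^2$, so $\tr \bigl( \mathbf b^k (\mathbf b^k)^* \bigr) = \sum_{i=1}^n |b^k_i|^2 = \| \mathbf b^k \|^2 = 1$ by direct computation.
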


\begin{proof}
	Consider any diagonalization $ \mathbf D = \mathbf U \boldsymbol \Pi _L \mathbf U^{-1} $ 
	of $ \boldsymbol \Pi _L $, where $ \mathbf U$ is a unitary matrix.
	It is easy to see that if $ \mathbf x \ne \mathbf 0$ 
	and $ \boldsymbol \Pi _L \mathbf x = \lambda \mathbf x$, then:
	either (i) $ \lambda = 1$ and $ \boldsymbol \Pi _L \mathbf x = \mathbf x$,
	implying that $ \mathbf x \in L$;
	or (ii) $ \lambda = 0$ and $ \boldsymbol \Pi _L \mathbf x = \mathbf 0$,
	implying that $ \mathbf x \in L^\perp $, the orthogonal complement of $L$
	that satisfies $L \bigoplus L^\perp = \C^n $.
	It follows that $ \mathbf D$ has $ m_L $ diagonal elements equal to 1,
	with the remaining $n - m_L $ diagonal elements all equal to~0.
	But then Lemma \ref{propTrace} implies that 
 \( \tr \boldsymbol \Pi _L = \tr ( \mathbf U^{-1} \mathbf D \mathbf U) = \tr \mathbf D = m_L \).
\end{proof}

\subsection{Density Matrices and Their Spectrum} \label{ss:densityMatrices}

The following is yet another standard definition.

\begin{definition}
	An $n \times n$ Hermitian matrix $ \boldsymbol \rho $ is:
	\begin{enumerate}
		\item \emph{positive semi-definite} just in case 
		\( \boldsymbol \psi ^* \boldsymbol \rho \, \boldsymbol \psi \ge 0 \)
		for all $ \boldsymbol \psi \in \C^n $;%
		\footnote{Most physicists and some mathematicians say 
		that such a matrix~$ \boldsymbol \rho $ is \emph{positive}.}
		\item a \emph{density matrix} just in case it is positive semi-definite 
		and $ \tr \boldsymbol \rho = 1$.
	\end{enumerate}
\end{definition}

\begin{definition}
	Given any $n \times n$ density matrix $ \boldsymbol \rho $ 
	and eigenvalue $ \lambda \in s^{ \boldsymbol \rho }$, 
	let $ m_\lambda \in \N$ denote the dimension of the corresponding eigenspace $ L_\lambda $.
\end{definition}

The following result gives a well known characterization.

\begin{proposition} \label{prop:densityMatrix}
	The $n \times n$ Hermitian matrix $ \boldsymbol \rho $ is a density matrix if and only if:	
	(i) all its eigenvalues $ \lambda \in s^{ \boldsymbol \rho }$ satisfy $ \lambda \ge 0$; 
	(ii) together the eigenvalues also satisfy the unit trace equation
\begin{equation} \label{eq:traceOne}
	\sum\nolimits _{ \lambda \in s^{ \boldsymbol \rho }}  m_\lambda \lambda = 1
\end{equation}
\end{proposition}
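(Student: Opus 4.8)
The plan is to separate the two defining conditions of a density matrix --- positive semi-definiteness and unit trace --- and to match each one, via results already established, to one of the two stated conditions (i) and (ii). Since a density matrix is by definition a positive semi-definite Hermitian matrix with $ \tr \boldsymbol \rho = 1 $, the proposition amounts to the two claims ``positive semi-definite $ \Leftrightarrow $ all eigenvalues nonnegative'' and ``$ \tr \boldsymbol \rho = 1 \Leftrightarrow $ equation \eqref{eq:traceOne}''.

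Condition (ii) will require essentially no work beyond an appeal to the trace formula. Because $ \boldsymbol \rho $ is by hypothesis Hermitian, Proposition \ref{prop:trace} applies and gives $ \tr \boldsymbol \rho = \sum_{ \lambda \in s^{ \boldsymbol \rho }} \lambda \, m_\lambda $. Hence the unit-trace requirement $ \tr \boldsymbol \rho = 1 $ is literally the same statement as \eqref{eq:traceOne}, so these are equivalent irrespective of the signs of the eigenvalues.

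The substantive content is the equivalence between positive semi-definiteness of $ \boldsymbol \rho $ and nonnegativity of all its eigenvalues. I would work from the spectral decomposition $ \boldsymbol \rho = \sum_{ \lambda \in s^{ \boldsymbol \rho }} \lambda \, \boldsymbol \Pi _{ L_\lambda } $ supplied by Proposition \ref{prop:spectralThm}. For the direction that (i) implies positive semi-definiteness, I would evaluate the quadratic form at an arbitrary $ \boldsymbol \psi \in \C^n $ to get $ \boldsymbol \psi ^* \boldsymbol \rho \, \boldsymbol \psi = \sum_\lambda \lambda \, \boldsymbol \psi ^* \boldsymbol \Pi _{ L_\lambda } \boldsymbol \psi $, and then invoke part 6 of Proposition \ref{prop:projMatrices} to rewrite each term as $ \lambda \, \| \boldsymbol \Pi _{ L_\lambda } \boldsymbol \psi \|^2 $. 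Since every factor $ \| \boldsymbol \Pi _{ L_\lambda } \boldsymbol \psi \|^2 $ is nonnegative, if all $ \lambda \ge 0 $ the whole sum is $ \ge 0 $, giving positive semi-definiteness. For the converse, I would take any eigenvalue $ \mu \in s^{ \boldsymbol \rho }$ with a nonzero eigenvector $ \mathbf x $; positive semi-definiteness applied to $ \mathbf x $ yields $ 0 \le \mathbf x^* \boldsymbol \rho \, \mathbf x = \mu \, \mathbf x^* \mathbf x = \mu \, \| \mathbf x \|^2 $, and since $ \| \mathbf x \|^2 > 0 $ this forces $ \mu \ge 0 $.

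Combining the two equivalences, $ \boldsymbol \rho $ is a density matrix exactly when both (i) and (ii) hold. I do not anticipate any genuine obstacle: every ingredient --- the spectral decomposition, the multiplicity-weighted trace formula, and the identity $ \boldsymbol \psi ^* \boldsymbol \Pi _L \boldsymbol \psi = \| \boldsymbol \Pi _L \boldsymbol \psi \|^2 $ --- is already in hand, so the proof is mainly a matter of assembling them. The only point deserving slight care is that positive semi-definiteness is quantified over \emph{all} $ \boldsymbol \psi $, so the reverse implication must be extracted by testing the form specifically on eigenvectors rather than on arbitrary vectors.
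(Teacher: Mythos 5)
Your proposal is correct, and its overall skeleton matches the paper's: both split the definition of a density matrix into its two clauses, and both dispose of the unit-trace clause by the identical appeal to Proposition \ref{prop:trace}, which makes $\tr \boldsymbol\rho = 1$ literally the same statement as \eqref{eq:traceOne}. Where you genuinely diverge is in the positive semi-definiteness clause. The paper takes a unitary diagonalization $\mathbf D = \mathbf U \boldsymbol\rho \mathbf U^{-1}$ and simply cites as ``well-known'' the fact that the quadratic form is positive semi-definite if and only if all diagonal entries of $\mathbf D$ are non-negative; it supplies no argument for this. You instead work from the projection-form spectral decomposition $\boldsymbol\rho = \sum_{\lambda} \lambda \, \boldsymbol\Pi_{L_\lambda}$ of Proposition \ref{prop:spectralThm} and prove both directions: the forward implication by writing $\boldsymbol\psi^* \boldsymbol\rho \, \boldsymbol\psi = \sum_\lambda \lambda \, \| \boldsymbol\Pi_{L_\lambda} \boldsymbol\psi \|^2$ via part 6 of Proposition \ref{prop:projMatrices}, and the converse by testing the form on an eigenvector, where $\| \mathbf x \|^2 > 0$ forces $\mu \ge 0$. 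Your version is therefore more self-contained --- every step rests on results already established in the paper rather than on an imported linear-algebra fact --- at the cost of a few extra lines; the paper's version is shorter but leaves its key equivalence unproved. Your closing remark about why the converse must be extracted on eigenvectors specifically, rather than arbitrary vectors, is exactly the right point of care, and it is precisely the content the paper's citation of the ``well-known'' fact glosses over.
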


\begin{proof}
	Given the $n \times n$ Hermitian matrix $ \boldsymbol \rho $, 
	let $ \mathbf D = \mathbf U \boldsymbol \rho \mathbf U^{-1} $ be any diagonalization.
\begin{enumerate}
	\item It is well-known that the quadratic form
 \( \boldsymbol \psi ^* \boldsymbol \rho \, \boldsymbol \psi \) 
 	is positive semi-definite if and only if all the diagonal elements of $ \mathbf D$ 
	are non-negative.
	Since the diagonal elements of $ \mathbf D$ are the eigenvalues of $ \boldsymbol \rho $,
	it follows that the matrix $ \boldsymbol \rho $ is positive semi-definite
	if and only if $ \lambda \ge 0$ for all $ \lambda \in s^{ \boldsymbol \rho }$.
	\item Applying Proposition \ref{prop:trace} to the matrix $ \boldsymbol \rho $ 
	rather than to $ \mathbf A$ gives 
 \( \tr \boldsymbol \rho = \sum _{ \lambda \in \Lambda } m_\lambda \lambda \) .
	It follows that $ \tr \boldsymbol \rho = 1$	if and only if \eqref{eq:traceOne} is satisfied.
\end{enumerate}
	The proposition follows immediately.
\end{proof}

\subsection{Quantum States as Ortho-Probability Measures} \label{ss:orthoProbs}

This section shows how any quantum state, 
in the form of a density matrix $ \boldsymbol \rho $ on $ \C^n $, 
can be identified with an ``ortho-probability'' measure 
over the components of an appropriate orthogonal decomposition of $ \C^n $
into the different eigenspaces of $ \boldsymbol \rho $.   
The following two definitions appear to be novel.

\begin{definition} \label{def:probLabelledOrthDecomp}
	Say that the orthogonal decomposition $ \bigoplus _{ \lambda \in \Lambda } L_\lambda $ 
	of $ \C^n $	is \emph{numerically identified by probability} 
	just in case the finite set $ \Lambda $ of numerical identifiers are all non-negative
	and, together with the dimensions $ m_\lambda $ of each subspace $ L_\lambda $, 
	satisfy the unit trace equation \eqref{eq:traceOne}.
\end{definition}

\begin{definition} \label{def:orthProb}
	Consider any probability space $( \C^n, \Sigma ^D, \mathbb P^D )$ 
	with $ \Sigma ^D $ as the ortho-algebra on $ \C^n $
 	generated by the orthogonal decomposition $ \bigoplus _{d \in D} L_d $,
	as specified in Definition \ref{def:orthoAlgebra},
	and with $ \mathbb P^D $ as a probability measure on $ \Sigma ^D $.

	The triple $( \C^n, \Sigma ^D, \mathbb P^D )$ is an \emph{ortho-probab\-ility space}
	with \emph{ortho-probab\-ility measure} $ \mathbb P^D $ just in case, 
	given the dimension $ m_d := \dim L_d \in \N$ for each $d \in D$, there exists a bijection 
 \( D \owns d \longleftrightarrow \lambda _d \in \Lambda \subset [0, 1] \) such that:
\begin{enumerate}
	\item $ \sum _{d \in D} m_d \, \lambda _d = 1$;
	\item for all $d \in D$, one has $ \mathbb P^D ( \hat L_d ) = m_d \, \lambda _d $;
	\item for the residual set $ R^D = \C^n \setminus \cup _{d \in D} \hat L_d $
	one has $ \mathbb P^D ( R^D ) = 0$.
\end{enumerate}
\end{definition}

Note that requiring the mapping $d \longleftrightarrow \lambda _d $ to be a bijection 
ensures that the orthogonal decomposition $ \bigoplus _{d \in D} L_d $
is numerically identified by its probability, with each set $ \hat L_d $ 
as the unique eigenspace corresponding to the eigenvalue $ \lambda _d $.     

Here is the second main result of the paper.

\begin{theorem} \label{thm:orthProb}
	There are natural bijections between the sets of:
	\begin{enumerate}
		\item $n \times n$ density matrices $ \boldsymbol \rho $;
		\item orthogonal decompositions $ \bigoplus _{ \lambda \in \Lambda } L_\lambda $ 
		of $ \C^n $ that are numerically identified by probability;
		\item ortho-probability spaces $( \C^n, \Sigma ^D, \mathbb P^D )$.	
	\end{enumerate}
\end{theorem}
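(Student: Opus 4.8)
The plan is to obtain both required bijections by restricting the correspondence already established in Theorem \ref{thm:threeBijections}, and then to compose them. First I would treat the bijection between sets (1) and (2). Given any density matrix $\boldsymbol{\rho}$, the spectral theorem (Proposition \ref{prop:spectralThm}) produces the numerically identified orthogonal decomposition $\bigoplus_{\lambda \in s^{\boldsymbol{\rho}}} L_\lambda$ with $\Lambda = s^{\boldsymbol{\rho}}$. By Proposition \ref{prop:densityMatrix}, being a density matrix is equivalent to having all eigenvalues $\lambda \geq 0$ together with the unit trace equation \eqref{eq:traceOne}; these are exactly the two conditions in Definition \ref{def:probLabelledOrthDecomp} that make the induced decomposition numerically identified by probability. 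Conversely, given any orthogonal decomposition $\bigoplus_{\lambda \in \Lambda} L_\lambda$ that is numerically identified by probability, I would reconstruct $\boldsymbol{\rho} := \sum_{\lambda \in \Lambda} \lambda \, \boldsymbol{\Pi}_{L_\lambda}$, as in the spectral decomposition \eqref{eq:spectralDecomp}. This matrix is Hermitian, as a real linear combination of the Hermitian projections $\boldsymbol{\Pi}_{L_\lambda}$; its eigenvalues are precisely the non-negative numbers $\lambda \in \Lambda$, since any nonzero $\mathbf{x} \in L_\mu$ satisfies $\boldsymbol{\rho}\,\mathbf{x} = \mu\,\mathbf{x}$ by orthogonality of the summands; and Proposition \ref{prop:traceProj}, which gives $\tr \boldsymbol{\Pi}_{L_\lambda} = m_\lambda$, together with the unit trace equation shows $\tr \boldsymbol{\rho} = \sum_\lambda m_\lambda \lambda = 1$. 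Hence $\boldsymbol{\rho}$ is a density matrix, and these two constructions invert one another through the chain of bijections \eqref{eq:bijections}.

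Second I would establish the bijection between sets (2) and (3). Given a decomposition numerically identified by probability, I would form the ortho-algebra $\Sigma^D$ of Definition \ref{def:orthoAlgebra} and define a set function on the cells by $\mathbb{P}^D(\hat{L}_\lambda) := m_\lambda \lambda$ and $\mathbb{P}^D(R^D) := 0$. Because every member of $\Sigma^D$ is a finite disjoint union of cells of the ortho-partition $\mathfrak{P}^D$, this extends uniquely and additively to a measure on $\Sigma^D$; non-negativity of the identifiers makes it non-negative, and the unit trace equation makes $\mathbb{P}^D(\mathbb{C}^n) = \sum_\lambda m_\lambda \lambda = 1$, so it is a probability measure satisfying the three requirements of Definition \ref{def:orthProb}. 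Conversely, reading an ortho-probability space through Definition \ref{def:orthProb} recovers the labeling uniquely via $\lambda_d = \mathbb{P}^D(\hat{L}_d)/m_d$, yielding a bijection $d \leftrightarrow \lambda_d$ with $\sum_d m_d \lambda_d = 1$ and each $\lambda_d \geq 0$, which is exactly a numerically identified by probability decomposition. These operations are mutually inverse, and composing the two bijections delivers all the correspondences claimed.

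The one point that needs a small argument, rather than mere bookkeeping, is reconciling the codomain $\Lambda \subset [0,1]$ demanded in Definition \ref{def:orthProb} with the weaker hypothesis in Definition \ref{def:probLabelledOrthDecomp} that the identifiers be merely non-negative. Here I would observe that the unit trace equation $\sum_{\lambda \in \Lambda} m_\lambda \lambda = 1$ is a sum of non-negative terms, so each individual term satisfies $m_\lambda \lambda \leq 1$, and since $m_\lambda \geq 1$ this forces $\lambda \leq 1$; hence every numerical identifier automatically lies in $[0,1]$ and the two definitions are consistent. I expect this compatibility check, together with verifying that the reconstructed $\boldsymbol{\rho}$ genuinely has the prescribed $\lambda$ as its eigenvalues with the correct multiplicities $m_\lambda$, to be the main (though still modest) obstacle; everything else follows by restricting Theorem \ref{thm:threeBijections} and by assigning a measure to the finitely many cells of a partition.
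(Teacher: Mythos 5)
Your proposal is correct and follows essentially the same route as the paper: both restrict the chain of bijections from Theorem \ref{thm:threeBijections} and observe, via Proposition \ref{prop:densityMatrix}, that the double condition (non-negative eigenvalues together with the unit trace equation \eqref{eq:traceOne}) characterizing density matrices is exactly the double condition defining Definitions \ref{def:probLabelledOrthDecomp} and \ref{def:orthProb}. Your write-up is more explicit than the paper's --- notably the additive construction of $\mathbb{P}^D$ on the cells of the ortho-partition, the reconstruction of $\boldsymbol\rho = \sum_{\lambda \in \Lambda} \lambda \, \boldsymbol\Pi_{L_\lambda}$ with verified spectrum, and the check that non-negativity plus unit trace forces $\Lambda \subset [0,1]$ --- but these fill in details the paper's terse argument glosses over rather than constituting a different approach.
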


\begin{proof}
	Given the spectrum $ \Lambda = s^{ \boldsymbol \rho }$ 
	of any Hermitian matrix $ \boldsymbol \rho $, 
	we rewrite the bijections in \eqref{eq:bijections} as
\begin{equation} 
	\boldsymbol \rho = \sum\nolimits _{ \lambda \in s^{ \boldsymbol \rho }} 
	\lambda \, \boldsymbol \Pi _{ L_\lambda } 
	\longleftrightarrow 
	\bigoplus\nolimits _{ \lambda \in \Lambda } L_\lambda 
	\longleftrightarrow 
	\Gamma \left( \bigoplus\nolimits _{ \lambda \in \Lambda } L_\lambda \right)
\end{equation}
	Here these are the bijections specified in Theorem \ref{thm:threeBijections} 
	between the three spaces of:
	(i) spectrally decomposed Hermitian matrices; 
	(ii) numerically identified orthogonal decompositions;
	and (iii) restricted graphs of ortho-measurable functions, 
	as defined in Theorem \ref{thm:threeBijections}.
	
	By Proposition \ref{prop:densityMatrix}, the Hermitian matrix $ \boldsymbol \rho $
	is a density matrix if and only if all its eigenvalues $ \lambda $:
	(i) are non-negative;
	(ii) and together satisfy the unit trace equation \eqref{eq:traceOne}.
	But this double condition exactly matches both: 
	(i) the same double condition on the spectrum $ \Lambda = s^{ \boldsymbol \rho }$ 
	as that used in Definition \ref{def:probLabelledOrthDecomp} 
	of an orthogonal decomposition \( \bigoplus\nolimits _{ \lambda \in \Lambda } L_\lambda \) 
	which is numerically identified by probability;
	(ii) the double condition used in Definition \ref{def:orthProb} 
	of an ortho-probability space.
\end{proof}

\subsection{Mixed versus Pure Quantum States} \label{ss:pureStates}

We have just shown how a general quantum state, 
as represented by a density matrix $ \boldsymbol \rho $, 
can be identified with an ortho-probability measure $ \mathbb P^{ \boldsymbol \rho }$ 
defined on the ortho-algebra $ \Sigma ^{ \boldsymbol \rho }$ generated 
by the orthogonal decomposition $ \bigoplus _{ \lambda \in s^{ \boldsymbol \rho }} E_\lambda $
of $ \C^n$ into the eigenspaces of $ \boldsymbol \rho $.
In general, one may regard $ \mathbb P^{ \boldsymbol \rho }$
as describing a \emph{mixed} quantum state.
The special case of a \emph{pure} quantum state occurs 
when the spectrum of $ \boldsymbol \rho $ satisfies $ s^{ \boldsymbol \rho } = \{0, 1\} $
and there exists a unique normalized wave vector $ \boldsymbol \phi \in \Sph $
such that the spectral decomposition 
\( \boldsymbol \rho = \sum _{ \lambda \in s^{ \boldsymbol \rho }} 
	\lambda \, \boldsymbol \Pi _{ L_\lambda } \) of $ \boldsymbol \rho $ 
reduces to the single non-zero term
\( \boldsymbol \rho = \boldsymbol \Pi _{[ \boldsymbol \phi ]} 
 = \boldsymbol \phi \, \boldsymbol \phi ^* \)
involving the projection $ \boldsymbol \Pi _{[ \boldsymbol \phi ]}$
onto the one-dimensional subspace $[ \boldsymbol \phi ]$ spanned by~$ \boldsymbol \phi $.

Consider a general measurable space $( \Omega, \mathcal A)$ in which, 
for each $ \omega \in \Omega $, one has $ \{ \omega \} \in \mathcal A$.  
Then there is evidently a convex set 
of possible probability measures $ \mathbb P$ on $( \Omega, \mathcal A)$ 
whose extreme points are the degenerate probability measures $ \delta _\omega $ 
on $( \Omega, \mathcal A)$ which, for each $ \omega \in \Omega $, 
satisfy $ \delta _\omega ( \{ \omega \} ) = 1$.

By contrast, consider the multi-measurable space $( \Omega, ( \Sigma ^D )_{D \in \mathcal D}) $ where each $ \sigma $-algebra $ \Sigma ^D $ may be a based on a different 
orthogonal decomposition $ \bigoplus _{d \in D} L_d $ of $ \C^n $.
Consider the mixture \( \boldsymbol \rho = \sum _{k = 1}^m \alpha _k \boldsymbol \rho _k \)
of the set $ \{ \boldsymbol \rho _k \} _{k = 1}^m $ of~$m$ different 
pure states $ \boldsymbol \rho _k = \boldsymbol \Pi _{[ \boldsymbol \phi _k ]} $, 
where $ \alpha _k > 0$ for $k = 1, 2, \ldots, m$ and $ \sum _{k = 1}^m \alpha _k = 1$. 
For~$ \boldsymbol \rho $ to be a mixed state in the sense of a density matrix, 
and so an ortho-probability measure,
there must exist one single common orthogonal decomposition $ \bigoplus _{d \in D} L_d $ 
of $ \C^n $ which includes all the spaces $[ \boldsymbol \phi _k ]$.
This implies mutual orthogonality of all the normalized wave vectors $ \boldsymbol \phi _k $, 
as well as of the corresponding projections $ \boldsymbol \Pi _{[ \boldsymbol \phi _k ]}$
onto the one-dimensional subspaces that they span.

\subsection{Consistent Multi-Probability Spaces}

The following definition applies to probability measures
the consistency notion that Vorob$'$ev (1962) applied to measures more generally.

\begin{definition} \label{def:vorob}
	Given the family $ \mathcal D$ of orthogonal decompositions of $ \C^n $,
	the multi-probability space \( ( \C^n, ( \Sigma ^D, \mathbb P^D )_{D \in \mathcal D}) \)
	is \emph{consistent}
	just in case, whenever $ D_1 $ and $ D_2 $ both belong to $ \mathcal D$,
	and the linear space $L$ has the property that $\hat L = L \setminus \{ \mathbf 0\} $ 
	belongs to both $ \sigma $-algebras $ \Sigma ^{ D_1 }$ and $ \Sigma ^{ D_2 }$,
	then the two contextual probability measures $\mathbb P^{ D_1 }$ and $ \mathbb P^{ D_2 }$
	satisfy \( \mathbb P^{ D_1 } ( \hat L) = \mathbb P^{ D_2 } ( \hat L) \).
\end{definition}

\subsection{Quantum Probability Distributions over Projections}

Let $ \mathcal L$ denote the set of all linear subspaces $L$ of $ \C^n $,
and $ \mathscr P := \{ \Pi _L \mid L \in \mathcal L \} $ the domain 
of all orthogonal projections onto subspaces $L$ of $ \C^n $.

\begin{definition} \label{def:quantProb}
A \emph{quantum probability distribution} is a mapping 
 \( \mathscr P \owns \Pi \mapsto \mu ( \Pi ) \in [0, 1] \) with the property that,
	if $ \{ \Pi _k \}_{k \in K} $ is a family of mutually orthogonal projection matrices, 
	then $ \mu \left( \sum _{k \in K} \Pi _k \right) = \sum _{k \in K} \mu ( \Pi _k ) $.%
\footnote{This is a finite-dimensional version
of the definition on p.~31 of Parthasarathy (1992).}
\end{definition}

\subsection{Characterizing Consistent Multi-Probability Spaces}

\begin{proposition}
	The multi-probability space \( ( \C^n, ( \Sigma ^D, \mathbb P^D )_{D \in \mathcal D}) \)
	is consistent if and only if there exists a quantum probability distribution
	 \( \mathscr P \owns \Pi \mapsto \mu ( \Pi ) \in [0, 1] \)
	such that, whenever $L \in \Sigma ^D $, then $ \mathbb P^D (L) = \mu ( \Pi _L )$,
	independent of the contextual orthogonal decomposition $D$.
\end{proposition}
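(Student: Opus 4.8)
The plan is to prove the two implications separately, treating the reverse direction as essentially immediate and concentrating the real work on constructing the quantum probability distribution $\mu$ from a consistent family. Throughout I identify each $\Pi \in \mathscr{P}$ with the unique subspace $L \in \mathcal{L}$ that is its range, so that $\Pi = \Pi_L$, and I read the clause ``$L \in \Sigma^D$'' as the statement that the cell-set $\hat{L} = L \setminus \{ \mathbf 0 \}$ belongs to the ortho-algebra $\Sigma^D$, with $\mathbb{P}^D(L)$ abbreviating $\mathbb{P}^D(\hat{L})$, exactly as in Definition \ref{def:vorob}. For the reverse implication, suppose such a $\mu$ exists. If $D_1, D_2 \in \mathcal{D}$ and a subspace $L$ satisfies $\hat{L} \in \Sigma^{D_1}$ and $\hat{L} \in \Sigma^{D_2}$, then by hypothesis $\mathbb{P}^{D_1}(\hat{L}) = \mu(\Pi_L) = \mathbb{P}^{D_2}(\hat{L})$, since the common value $\mu(\Pi_L)$ depends only on $L$ and not on the context. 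This is precisely the consistency condition of Definition \ref{def:vorob}.

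For the forward implication, I assume consistency and define $\mu$ as follows. Given any $L \in \mathcal{L}$, I first observe that $\hat{L} \in \Sigma^D$ for at least one $D \in \mathcal{D}$: the pair $\{ L, L^\perp \}$, discarding whichever summand equals $\{ \mathbf 0 \}$, is an orthogonal decomposition of $\C^n$ whose ortho-algebra contains the cell $\hat{L}$. I then set $\mu(\Pi_L) := \mathbb{P}^D(\hat{L})$ for any such $D$. Well-definedness is immediate from consistency: if $\hat{L}$ lies in both $\Sigma^{D_1}$ and $\Sigma^{D_2}$, then $\mathbb{P}^{D_1}(\hat{L}) = \mathbb{P}^{D_2}(\hat{L})$, so the assigned value does not depend on the chosen decomposition. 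Because each $\mathbb{P}^D$ is a probability measure, $\mu(\Pi_L) \in [0,1]$, and by construction $\mathbb{P}^D(\hat{L}) = \mu(\Pi_L)$ whenever $\hat{L} \in \Sigma^D$.

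It remains to verify the additivity property required by Definition \ref{def:quantProb}. Let $\{ \Pi_k \}_{k \in K}$ be a family of mutually orthogonal projections, say $\Pi_k = \Pi_{L_k}$ onto mutually orthogonal subspaces $L_k$; since $\C^n$ is finite-dimensional, this family is finite. By Proposition \ref{prop:projMatrices}, extended inductively from the two-summand case, $\sum_{k \in K} \Pi_k$ is the projection $\Pi_L$ onto $L := \bigoplus_{k \in K} L_k$. I now form the single orthogonal decomposition $D$ consisting of the subspaces $L_k$ together with $L^\perp$, again discarding $\{ \mathbf 0 \}$ if it occurs. Then each $\hat{L}_k$ is a cell of the induced ortho-partition, and $\hat{L} = \cup_{k \in K} \hat{L}_k$ is a disjoint union of these cells, so every $\hat{L}_k$ and $\hat{L}$ belongs to $\Sigma^D$. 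Evaluating $\mu$ through this common $D$ and applying finite additivity of $\mathbb{P}^D$ over the disjoint cells yields $\mu(\sum_{k} \Pi_k) = \mathbb{P}^D(\hat{L}) = \sum_{k} \mathbb{P}^D(\hat{L}_k) = \sum_{k} \mu(\Pi_k)$, as required.

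The crux of the argument, and the step I expect to be the main obstacle to making fully rigorous, is the use of a single shared decomposition $D$ in the additivity verification: one must confirm that every mutually orthogonal family of subspaces extends to a genuine orthogonal decomposition of all of $\C^n$ (by adjoining the complement $L^\perp$), and that the value $\mu(\Pi_k)$ read off from this shared $D$ coincides with the value from whatever decomposition was used when $\mu$ was first defined --- which is exactly what well-definedness, and hence consistency, guarantees. Everything else reduces to the elementary facts that projections onto mutually orthogonal subspaces sum to the projection onto their direct sum, and that the associated non-residual cells partition $\hat{L}$.
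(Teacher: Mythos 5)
Your reverse implication and your construction of $\mu$ (define $\mu(\Pi_L) := \mathbb P^D(\hat L)$ via the two-block decomposition $\{L, L^\perp\}$, with well-definedness from consistency) are sound, and you are right that real work is needed: the paper itself offers only a one-sentence proof declaring the result ``an immediate implication'' of Definitions \ref{def:vorob} and \ref{def:quantProb}, so there is no detailed argument to compare against. The problem is that your additivity verification fails at exactly the step you flagged as the crux. For the refined decomposition $D$ consisting of the $L_k$ together with $L^\perp$, the identity $\hat L = \cup_{k \in K} \hat L_k$, where $L = \bigoplus_{k \in K} L_k$, is false whenever at least two of the $L_k$ are nonzero: taking $\mathbf x_1 \in \hat L_1$ and $\mathbf x_2 \in \hat L_2$, the cross term $\mathbf x_1 + \mathbf x_2$ is a nonzero vector of $L$, hence lies in $\hat L$, but it belongs to no single $L_k$; in the ortho-partition $\mathfrak P^D$ it sits in the residual set $R^D$. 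Worse, $\hat L$ is not in $\Sigma^D$ at all: any member of $\Sigma^D$ omitting $\mathbf 0$ must be a union of cells $\hat L_d$ (the residual cell contains $\mathbf 0$), and a union of two or more punctured, mutually orthogonal subspaces is never itself a punctured subspace, since it is not closed under addition of its members. So $\hat L \in \Sigma^D$ forces $L$ to be one of the components of $D$ (apart from the trivial case $L = \{\mathbf 0\}$); there is no single context in which $\hat L$ and all the $\hat L_k$ are simultaneously measurable unless $\#K = 1$, and the chain $\mu(\sum_k \Pi_k) = \mathbb P^D(\hat L) = \sum_k \mathbb P^D(\hat L_k)$ breaks at its first two equalities.

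The repair goes through the shared component $L^\perp$ rather than through measurability of $\hat L$ in the refined context, and it needs an extra hypothesis. Compare $D_1 = \{L, L^\perp\}$ with $D_2 = \{L_k\}_{k \in K} \cup \{L^\perp\}$: consistency applied to the common component $L^\perp$ gives $\mathbb P^{D_1}(\widehat{L^\perp}) = \mathbb P^{D_2}(\widehat{L^\perp})$. If, in addition, every contextual measure annihilates its residual set --- as holds for the Born-rule measures of Section \ref{ss:multProbSpace} and is condition 3 of Definition \ref{def:orthProb}, but is \emph{not} required by Definition \ref{def:vorob} --- then the non-residual cell probabilities in each context sum to one, so $\mu(\Pi_L) = \mathbb P^{D_1}(\hat L) = 1 - \mathbb P^{D_1}(\widehat{L^\perp}) = 1 - \mathbb P^{D_2}(\widehat{L^\perp}) = \sum_k \mathbb P^{D_2}(\hat L_k) = \sum_k \mu(\Pi_{L_k})$, with the case $L = \C^n$ handled directly since both sides then equal one. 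This residual-null hypothesis is not cosmetic: in $\C^2$, assign the two punctured lines of one fixed orthonormal basis probability $0.3$ each and its residual set $0.4$, while every other context has cell probabilities summing to one. Distinct two-line contexts share no line, and neither $\widehat{\C^2}$ nor any punctured line is measurable in more than one ortho-algebra, so this family is consistent; yet any $\mu$ as in the proposition would satisfy $\mu(\mathbf I) = \mathbb P^{\{\C^2\}}(\widehat{\C^2}) = 1$ while additivity forces $\mu(\mathbf I) = 0.3 + 0.3 = 0.6$. So either you add the residual-null assumption explicitly --- arguably it is implicit in the paper's notion of a multi-probability space, though neither the statement nor the paper's one-line proof says so --- or the forward implication cannot be proved, because it is false in the stated generality.
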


\begin{proof}
	The result is an immediate implication 
	of the two Definitions \ref{def:vorob} and \ref{def:quantProb}.
\end{proof}

Suppose the dimension $n$ of $ \C^n $ satisfies $n \ge 3$.
Then a corollary of Gleason's (1957) theorem due to Parthasarathy (1992, Theorem 9.18)
implies that there is a density matrix $ \boldsymbol \rho $ 
satisfying $ \mu ( \Pi ) = \tr ( \boldsymbol \rho \, \Pi )$ 
for all projections $ \Pi \in \mathscr P$.
This implies the \emph{trace rule} stating that,
for all $L \in \mathcal L$ and all $D \in \mathcal D$, 
one has $ \mathbb P^D (L) = \mu ( \Pi _L ) = \tr ( \boldsymbol \rho \, \Pi _L ) $,
which the \emph{Hilbert--Schmidt inner product} 
of the two matrices $ \boldsymbol \rho $ and $ \Pi _L $.  

\section{Conclusion} \label{s:conclude}
\subsection{A Quantum Measurement Tree} \label{ss:qmt}

Shafer and Vovk (2001, pp.\ 189--191) present 
a convenient and concise mathematical description 
of a typical simple quantum experiment based on a finite-dimensional Hilbert space $ \C^n $
where observables can be represented by Hermitian matrices rather than self-adjoint operators.
Where time is not explicitly involved, 
this paper has laid out an alternative and possibly more informative description
in the form of a very simple quantum measurement tree.
This takes the form of a tree graph in which each path through the tree has three nodes, 
as follows.

\begin{enumerate}
	\item First there is an initial \emph{preparation node} $ n_0 $ 
	where an experimental configuration or context $c$ in a finite domain $C$ is determined.%
\footnote{We insist on the domain $C$ of contexts being finite only to ensure
that the measurement tree is finite, in the sense of having a finite set of nodes.
Extensions allowing $C$ to be infinite only present difficulties if,
for instance, we want to introduce a general measurable random process 
which determines the experimental configuration $c \in C$.}  
	Each context $c \in C$ combines: 
	\begin{itemize}
		\item a \emph{quantum observable} in the form of a Hermitian matrix $ \mathbf A$
		whose eigenvectors and eigenvalues determine 
		a numerically identified orthogonal decomposition of $ \C^n $,
		with an associated \emph{ortho-algebra} of ortho-measur\-able events,
		and whose spectral decomposition corresponds to an \emph{ortho-measur\-able function}
		\( \C^n \owns \boldsymbol \psi \mapsto f( \boldsymbol \psi ) \in \R \)
		that associates each eigenvalue $ \lambda \in s^\mathbf A$ 
		with its \emph{eigenspace} $ E_\lambda = f^{-1} ( \lambda )$ 
		of corresponding eigenvectors;
		\item a \emph{quantum density matrix} $ \boldsymbol \rho $ in the form 
		of an orthogonal decomposition of $ \C^n $ identified by probability. 
	\end{itemize}
	\item Second, the immediate successors of the initial node $ n_0 $ consist, 
	for each contextual pair $( \mathbf A, \boldsymbol \rho )$, 
	a unique \emph{measurement node} $ n_{ \mathbf A, \boldsymbol \rho }$
	in the form of a chance node where there is a roulette lottery, 
	with a probability mass function $ \mathbb P^{ \mathbf A, \boldsymbol \rho }$ 
	defined on the finite set $ \{ E_\lambda \mid \lambda \in s^\mathbf A \}$ 
	of eigenspaces of~$ \mathbf A$ by the \emph{trace rule} requiring that 
 \( \mathbb P^{ \mathbf A, \boldsymbol \rho } ( E_\lambda ) 
 = \tr ( \boldsymbol \rho \, \Pi _{E _\lambda } ) \)
	for each eigenvalue $ \lambda \in s^\mathbf A$.
	\item Third, the immediate successors 
	of each measurement node $ n_{ \mathbf A, \boldsymbol \rho }$ consist, 
	for each potential value $ \lambda \in s^\mathbf A$ of the observable $ \mathbf A$, 
	of a unique terminal \emph{observable node} $ n_{ \mathbf A, \boldsymbol \rho, \lambda }$ 
	of the tree that is identified with the potential observation $ \lambda \in s^\mathbf A$.
\end{enumerate}

\subsection{Reduction to a Probability Metaspace}

In Hammond (2025) two particular multi-probability spaces were each reduced 
to a single probability metaspace whose sample space included a variable $ \sigma $-algebra.
Each reduction involved a finite domain $C$ of possible contexts, 
together with a probability mass function $C \owns c \mapsto q_c \in [0, 1] $
in the set $ \Delta (C) $ of those mappings that satisfy $ \sum _{c \in C} q_c = 1$.
Applying a similar reduction to the quantum measurement tree described in Section \ref{ss:qmt}
results in a probability metaspace $( \Omega ^M, \Sigma ^M, \mathbb P^M_q )$ where:
\begin{enumerate}
	\item The sample space $ \Omega ^M $ 
	is the Cartesian product set $ \C^n \times C$ of pairs $( \boldsymbol \psi, c)$ 
	that combine a wave vector $ \boldsymbol \psi \in \C^n $
	with a context $c$ chosen from the finite set $C$ 
	that determines a Hermitian matrix $ \mathbf A_c $ 
	which induces an ortho-algebra $ \Sigma _c = \Sigma ^{ \mathbf A_c }$ on $ \C^n $,
	including a residual set $ R_c = R ^{ \mathbf A_c }$.
	\item Let $ \mathfrak P^M $ be the partition of $ \Omega ^M $ into the finite collection
\begin{equation} \label{eq:partitionM}
	\cup _{c \in C} \left[ \cup _{E \in \Sigma _c } (E \times \{c\} ) \right]
\end{equation}
 	of cells which, for each context $c \in C$ 
	and then for every ortho-measurable set $E \in \Sigma _c $, 
	take the form of the Cartesian product set $E \times \{c\} $. 
	Then $ \Sigma ^M $ is the $ \sigma $-algebra $ \sigma ( \mathfrak P^M )$ 
	generated by the cells of the partition $ \mathfrak P^M $, 
	following the construction specified in Section \ref{ss:orthoMeas}.
	\item For each probability mass function $q \in \Delta (C) $, 
	the probability measure $ \mathbb P^M_q $ is the unique extension 
	to the finite $ \sigma $-algebra $ \Sigma ^M $ of the probability mass function which, 
	for each context $c \in C$ 
	and associated orthogonal decomposition $ \bigoplus _{ \lambda \in \Lambda _c } L_\lambda $
	into eigenspaces of $ \mathbf A_c $, satisfies $ \mathbb P^M_q ( R_c \times \{c\} ) = 0$ 
	for the residual set $ \R_c = \C^n \setminus \cup _{ \lambda \in \Lambda _c } \hat L_\lambda $,
	as well as 
 \( \mathbb P^M_q ( L_\lambda \times \{c\} ) = q_c \tr ( \boldsymbol \rho \, \Pi _{ L_\lambda }) \)
	for each $ \lambda \in \Lambda _c $.
\end{enumerate}

\subsection{Concluding Remarks and Future Research}

Consider any quantum observable represented by the $n \times n$ Hermitian matrix $ \mathbf A$. 
Note that each eigenvalue $ \lambda $ in the spectrum $ s^\mathbf A$ of $ \mathbf A$
is only a \emph{potential} observation.
Even if the eigenspace $ E_\lambda $ is the realized result of the roulette lottery
at the measurement node $ n_{ \mathbf A, \boldsymbol \rho }$ 
of the quantum measurement tree described in Section \ref{ss:qmt},
whether the eigenvalue $ \lambda $ is \emph{actually} observed depends 
on whether the experimental configuration includes some device for making that observation.

Note too that in this paper the roulette lotteries 
with probabilities specified by the trace rule have been limited to those
with a finite set of outcomes in the real line,
whose corresponding measurement operators are Hermitian matrices.
In future work it is planned to consider both:
(i) more general real-valued observables 
with an infinite range of possible values, possibly unbounded;
(ii) vector-valued observables 
with a range of multidimensional simultaneously observed measurements, 
whose different components are described by commuting Hermitian matrices.

Finally, one reason to consider quantum measurement trees is that, in principle, 
they should be able to describe the effects of a sequence of quantum measurements.
Then, of course, it is important to model how any quantum measurement affects
the unitary evolution of a quantum state.
This is especially true when, on the basis of Theorem \ref{thm:orthProb},
we regard any quantum state as an ortho-probability measure 
on the components of the appropriate orthogonal decomposition of $ \C^n $
into the eigenspaces of the usual Hermitian density matrix.

\medskip
\begin{center}
\textbf{Acknowledgements}:
\end{center}\noindent
This is an extensively revised and corrected version
of a paper based on the second half of a talk in March 2023 entitled
``Quantum Observables, Contextual Boolean Algebras, 
and Bayesian Rationality in Decision Trees''.
This was presented to the Workshop 
on the Applications of Topology to Quantum Theory and Behavioral Economics,
held at the Fields Institute for Research in Mathematical Sciences in Toronto.
This latest revision owes much to two anonymous referees 
and to two editors Emmanuel Haven and Jerome Busemyer, 
whose very helpful remarks did much to encourage and facilitate 
some essential corrections as well as significant improvements.
Finally, I repeat the earlier acknowledgements in the companion paper Hammond (2025).

\newpage
\begin{center} \textbf{References} 
\end{center}

\begin{description}
\small

\item Anscombe, Frank J., and Robert J. Aumann (1963)	
	``A Definition of Subjective Probability''
	\textit{Annals of Mathematical Statistics} 34: 199--205.

\item Friedberg, Richard M., and Pierre C. Hohenberg (2014) ``Compatible Quantum Theory''
	\textit{Reports on Progress in Physics} 77: 092001.

\item Gleason, Andrew M. (1957)
	``Measures on the Closed Subspaces of a Hilbert Space''
	\textit{Journal of Mathematics and Mechanics} 6 (4): 885--893.

\item Griffiths, Robert B. (2002) \textit{Consistent Quantum Theory}
	(Cambridge University Press).
	
\item Hammond, Peter J. (1988) ``Consequentialist Foundations for Expected Utility''
	\textit{Theory and Decision} 25: 25--78.

\item Hammond, Peter J. (2022) 
	``Prerationality as Avoiding Predictably Regrettable Consequences''
	\textit{Revue \'Economique} 73: 943--976.
	
\item Hammond, Peter J. (2025) ``Quantum Measurement Trees, I: 
	Two Preliminary Examples of Induced Contextual Boolean Algebras'' 
	University of Warwick, Centre for Research in Economic Theory and Applications (CRETA),
	Working Paper No.\ 90; 
	forthcoming in \textit{Philosophical Transactions of the Royal Society A}.

\item Hohenberg, Pierre C. (2010) 
	``Colloquium: An Introduction to Consistent Quantum Theory''
	\textit{Review of Modern Physics} 82 (4): 2835--2844.

\item Parthasarathy, Kalyanapuram Rangachari (1992)
	\textit{An Introduction to Quantum Stochastic Calculus} (Basel: Birkh\"auser Verlag).

\item Raiffa, Howard (1968) 
	\textit{Decision Analysis: Introductory Lectures on Choices under Uncertainty}
	(Addison-Wesley).

\item Savage, L.J. (1954, 1972) \textit{Foundations of Statistics}
	(New York: John Wiley; and New York: Dover Publications).
	
\item Shafer, Glenn, and Alexander Vovk (2001) 
	\textit{Probability and Finance: It's Only a Game!} (Wiley).

\item Von Neumann, John (1928) ``Zur Theorie der Gesellschaftsspiele'' 
	\textit{Mathematische Annalen} 100: 295--320.
	
\item Vorob$'$ev, Nikolai N. (1962) 
	``Consistent Families of Measures and Their Extensions''
	\textit{Theory of Probability and its Applications} 7: 147--163.

\end{description}
\end{document}